\def\bSig\mathbf{\Sigma}
\newtheorem{theorem}{Theorem}
\newtheorem{lemma}{Lemma}
\begin{document}

%%%%%%%%%%%%%%%%%%%%%%%%%%%%%%%%%%%%%%%%%%%%%%%%%%%%%%%%%%%%%%%%%%%%%%%%%%%%%%

  \title{\bf Nonparametric Analysis of Nonhomogeneous Multi-State Processes based on Clustered Observations}
  \author{Giorgos Bakoyannis \\
    Department of Biostatistics, Indiana University}
  \maketitle

%\bigskip
\def\spacingset#1{\renewcommand{\baselinestretch}%
{#1}\small\normalsize} \spacingset{1}

%%%%%%%%%%%%%%%%%%%%%%%%%%%%%%%%%%%%%%%%%%%%%%%%%%%%%%%%%%%%%%%%%%%%%%%%%%%%%%

%\bigskip
\begin{abstract}
Frequently, clinical trials and observational studies involve complex event history data with multiple events. When the observations are independent, the analysis of such studies can be based on standard methods for multi-state models. However, the independence assumption is often violated, such as in multicenter studies, which makes the use of standard methods improper. In this work we address the issue of nonparametric estimation and two-sample testing for the population-averaged transition and state occupation probabilities under general multi-state models based on right-censored, left-truncated, and clustered observations. The proposed methods do not impose assumptions regarding the within-cluster dependence, allow for informative cluster size, and are applicable to both Markov and non-Markov processes. Using empirical process theory, the estimators are shown to be uniformly consistent and to converge weakly to tight Gaussian processes. Closed-form variance estimators are derived, rigorous methodology for the calculation of simultaneous confidence bands is proposed, and the asymptotic properties of the nonparametric tests are established. Furthermore, we provide theoretical arguments for the validity of the nonparametric cluster bootstrap, which can be readily implemented in practice regardless of how complex the underlying multi-state model is. Simulation studies show that the performance of the proposed methods is good, and that methods that ignore the within-cluster dependence can lead to invalid inferences. Finally, the methods are applied to data from a multicenter randomized controlled trial.
\end{abstract}

\noindent%
{\it Keywords:} Multi-state model; Multicenter; Nonparametric test; State occupation probability; Transition probability.
\vfill

\newpage
\spacingset{1.45} % DON'T change the spacing!
\section{Introduction}
\label{s:intro}
Frequently, clinical trials and observational studies involve complex multi-state event histories. An example is cancer clinical trials where patient event histories typically involve three or more clinical states, such as ``cancer-free'', ``cancer'', and ``death''. Another example is observational studies of HIV-infected individuals in HIV care programs. In such studies, infected individuals can start antiretroviral treatment (ART), have a gap in care, return to care after a gap, and die after being in any of the aforementioned states. %The analysis of such event history data cannot be performed using standard methods for survival and competing risks analysis as these methods cannot account for all the between-state transitions. Therefore, standard methods provide a limited insight about the underlying multi-state process of scientific interest.
%The analysis of such complex event history data requires methods for multi-state models.
When the observations are independent, nonparametric estimation of the transition probabilities for such multi-state processes can be performed using the Aalen--Johansen estimator \citep{Aalen78}. Calculation of simultaneous confidence bands and nonparametric two-sample hypothesis tests can be performed using the recently proposed approaches by \citet{Bluhmki18} and \citet{Bakoyannis19b}, respectively.

The independent observations assumption is often violated in medical research. This is typical in multicenter studies, where the events of individuals within the same center are expected to be associated. Such a multicenter study is the motivating European Organization for Research and Treatment of Cancer (EORTC) trial 10854, which evaluated the effectiveness of the combination of surgery with polychemotherapy compared to surgery alone as a treatment for early breast cancer. In total, 2793 early breast cancer patients from 15 hospitals (i.e. centers/clusters) were recruited in this trial. The patient event history in this trial involved three states: i) cancer-free, ii) cancer relapse, and iii) death. When the observations exhibit within-cluster dependence, the traditional Greenwood standard error estimators for the transition probabilities, the simultaneous confidence bands by \citet{Bluhmki18}, and the nonparametric tests by \citet{Bakoyannis19b} are not valid.

Several parametric methods have been proposed for the analysis of multi-state models based on clustered observations \citep{Cook04,Li15,Yiu18}. However, these methods impose strong parametric assumptions about the underlying multi-state processes which are expected to be violated in practice. \citet{Chen13} proposed a semiparametric random-effects approach for cluster-specific inference about non-homogeneous Markov processes. This approach, which also allows for non-ignorable missingness, utilizes a Monte-Carlo EM algorithm. Recently, \citet{OKeeffe18} proposed a nonparametric approach for cluster-specific inference based on correlated observations from a general multi-state model. This approach, similarly to the \citet{Chen13} method, accounts for the within-cluster dependence by incorporating random effects. Estimation in this case relies on numerical integration. There are no other nonparametric approaches for clustered multi-state data that utilize random effects that we are aware of. The current semiparametric and nonparametric proposals for clustered observations that utilize random effects \citep{Chen13,OKeeffe18} have several limitations. First, they impose strong parametric assumptions on the random effects. Also, these random effects introduce only a restrictive positive within-cluster association. Second, they tend to be computationally intensive which may restrict their use with larger data sets. Third, they do not establish the asymptotic properties of the proposed estimators for the transition probabilities. Moreover, they do not provide methodology for simultaneous confidence bands and nonparametric hypothesis testing. Fourth, they do not consider the case of informative cluster size, where there is an association between cluster size and observed events. Finally, in many applications, population-averaged inference is more scientifically relevant than cluster-specific inference. This is the case with our motivating EORTC trial 10854. To our knowledge, only \citet{Dipankar17} proposed a method for nonparametric population-averaged inference about state occupation probabilities in general multi-state models. Importantly, \citet{Dipankar17} allow for informative cluster size. However, this approach is for current status data and not the usual right-censored or left-truncated multi-state data. Moreover, the asymptotic properties of this method have not been established, and there is no methodology for simultaneous confidence bands and nonparametric tests.

To the best of our knowledge, the issue of nonparametric population-averaged inference for event probabilities in general multi-state models based on right-censored, left-truncated, and clustered observations has not been addressed so far. In this work, we address this issue by proposing rigorous estimators and methodology for standard error estimation, simultaneous confidence bands, and nonparametric two-sample Kolmogorov--Smirnov-type tests. The asymptotic properties of the proposed methods are rigorously established using modern empirical process theory and closed-form variance estimators are provided. In addition, we establish the validity of the nonparametric cluster bootstrap and show how it can be used for the calculation of simultaneous confidence bands and $p$-values. This is particularly useful in practice, since it provides a convenient way to conduct inference using off-the-shelf software. The proposed methods do not impose restrictive parametric assumptions or assumptions regarding the within-cluster dependence. We additionally allow for informative cluster size and nonhomogeneous processes which are non-Markov. Simulation studies show that the methods perform well and that standard methods for independent observations provide severely under-estimated standard errors and confidence bands with a poor coverage rate. Finally, the methods are applied to the data from the multicenter EORTC trial 10854.

\section{Nonparametric estimation}
\label{s:method}
\subsection{Independent observations}
Consider a continuous time nonhomogeneous Markov process $\{X(t):t\in [0,\tau]\}$, for some $\tau<\infty$, with a finite state space $\mathcal{S}=\{1,\ldots,k\}$ and a subspace $\mathcal{T}\subset\mathcal{S}$ that includes the possible absorbing states (e.g. death). For situations without absorbing states we set $\mathcal{T}=\varnothing$. The Markov assumption will be relaxed later in subsection \ref{ss:nonmark}. The stochastic behavior of the process can be described by the $k\times k$ transition probability matrix $\tilde{\mathbf{P}}_{0}(s,t)$, with elements
\begin{eqnarray*}
\tilde{P}_{0,hj}(s,t)&=&\Pr(X(t)=j|X(s)=h,\mathcal{F}_{s^-}) \nonumber \\
&=&\Pr(X(t)=j|X(s)=h) \ \ \ \ h,j\in\mathcal{S}, \ \ 0\leq s< t\leq \tau,
\end{eqnarray*}
where $\mathcal{F}_{s^-}=\sigma\big\langle\{\check{N}_{hj}(u):0\leq u< s,h\neq j\}\big\rangle$ is the event history prior to time $s$, with $\check{N}_{hj}(t)$ being the number of direct transitions from state $h$ to state $j$, $h\neq j$, in $[0,t]$ in the absense of right censoring or left truncation. Note that the conditional independence from the prior history $\mathcal{F}_{s^-}$ above is the Markov assumption. If the transition probabilities are absolutely continuous then the transition intensities are defined as $\tilde{\alpha}_{0,hj}(t)=\lim_{\delta\downarrow 0}\tilde{P}_{0,hj}(t,t+\delta)/\delta$ for $h\in\mathcal{T}^c$ and $j\in\mathcal{S}$, where $\tilde{a}_{0,hh}(t)=-\sum_{j\neq h} \tilde{a}_{0,hj}(t)$. Another key quantity is the cumulative transition intensity which is defined as $\tilde{A}_{0,hj}(t)=\int_0^t \tilde{a}_{0,hj}(u)du$ for the absolute continuous case, or more generally, by the Kolmogorov forward equation \citep{Aalen08}, as $\tilde{A}_{0,hj}(t)=\int_0^t[E\check{Y}_h(u)]^{-1}dE\check{N}_{hj}(u)$, $h\neq j$, $t\in[0,\tau]$, with $\tilde{A}_{0,hh}(t)=-\sum_{j\neq h} \tilde{A}_{0,hj}(t)$, where $\check{Y}_h(t)$ is the at-risk process for state $h$, with $\check{Y}_h(t)=1$ if the process is at state $h$ just before time $t$ and $Y_h(t)=0$ otherwise. Based on the $k\times k$ matrix $\tilde{\mathbf{A}}_{0}(t)$ of cumulative transition intensities, the transition probability matrix can be defined as $\tilde{\mathbf{P}}_{0}(s,t)=\prodi_{(s,t]}[\mathbf{I}_k+d\tilde{\mathbf{A}}_{0}(u)]$, $t\in[0,\tau]$ where $\prodi$ is the product integral and $\mathbf{I}_k$ is the $k\times k$ identity matrix \citep{Andersen12}. Another quantity of interest is the state occupation probability $\tilde{P}_{0,j}(t)=\Pr(X(t)=j)$, which is defined as $\tilde{P}_{0,j}(t)=\sum_{h\in\mathcal{T}^c}\tilde{P}_{0,h}(0)\tilde{P}_{0,hj}(0,t)$, $j\in\mathcal{S}$, $t\in[0,\tau]$ \citep{Andersen12}. Estimation of the cumulative transition intensities based on independent observations of the process $X(\cdot)$ can be achieved using the nonparametric Nelson--Aalen estimator, which allows for both independent right censoring and left truncation. The Nelson--Aalen estimator can be used to obtain natural plug-in estimators of the transition probability matrix and the state occupation probabilities \citep{Andersen12}. For the latter case, the estimator of $\tilde{P}_{0,h}(0)$ is the sample proportion of the observations at state $h$ at time $t=0$. %In the remainder of this Section we consider inference about the population-averaged transition probabilities and state occupation probabilities with cluster-dependent observations.

\subsection{Clustered observations}
\label{ss:clust}
Suppose that a study involves $n$ clusters of observations of the Markov process $\{X(t):t\in[0,\tau]\}$, with $M_i$ observations in the $i$th cluster. The observable data are the possibly right-censored and/or left-truncated counting processes $\{N_{im,hj}(t):h\neq j,t\in[0,\tau]\}$ and the at-risk processes $\{Y_{im,h}(t):h\in\mathcal{T}^c,t\in[0,\tau]\}$, for $i=1,\ldots,n$ and $m=1,\ldots,M_i$. Note that $\{N_{im,hj}(t):h\neq j,t\in[0,\tau]\}$ and $\{Y_{im,h}(t):h\in\mathcal{T}^c,t\in[0,\tau]\}$ are the observable versions of the complete (i.e. non-right-censored and non-left-truncated) processes $\{\check{N}_{im,hj}(t):h\neq j,t\in[0,\tau]\}$ and $\{\check{Y}_{im,h}(t):h\in\mathcal{T}^c,t\in[0,\tau]\}$. For a randomly selected cluster member $m_i$, the stochastic processes $\{N_{im_i,hj}(t):h\neq j,t\in[0,\tau]\}$ and $\{Y_{im_i,h}(t):h\in\mathcal{T}^c,t\in[0,\tau]\}$ for $i=1,\ldots,n$ are assumed to be i.i.d.. However, the individual counting and at-risk processes are allowed to be dependent within clusters, with an arbitrary dependence structure. In this article we assume that the cluster sizes $M_i$, $i=1,\ldots,n$, are i.i.d. random positive integers. Furthermore, we allow the counting and at risk-processes to depend on cluster size $M_i$ (informative or nonignorable cluster size). However, the methods we present here are trivially applicable to simpler situations where cluster size $M_i$ is either non-informative or fixed.

In general, when the cluster size is random and informative, there are two population-averaged parameters of interest %\citep{Williamson03,Seaman14}.
\citep{Seaman14}. The first one corresponds to the population of all cluster members, while the second concerns the population of typical cluster members. The population-averaged state occupation probabilities over the population of all cluster members are defined, similarly to marginal generalized linear models \citep{Seaman14}, as $P_{0,j}(t) = E[M_1I(X_{1m}(t)=j)]\big/EM_1$, $j\in\mathcal{S}$, $t\in[0,\tau]$, for a randomly selected cluster member $m$. This can be seen as a weighted average where larger clusters have a larger influence on the estimand. The population-averaged state occupation probabilities over the population of typical cluster members are defined as $P_{0,j}'(t) = EI(X_{1m}(t)=j)$, $j\in\mathcal{S}$, for a randomly selected cluster member $m$. In this case all clusters contribute a single (randomly selected) member and, therefore, all cluster have the same weight on the estimand. %Similarly, one can define the two versions of the population-averaged counting processes as $E[M_1N_{1m,hj}(t)]\big/EM_1$ and $EN_{1m,hj}(t)$, $h\neq j$, $t\in[0,\tau]$. The corresponding versions of the population-averaged at-risk processes are similarly defined.
The two versions of the population-averaged transition probabilities can be defined similarly to $P_{0,j}(t)$ and $P_{0,j}'(t)$, for $j\in\mathcal{S}$. This leads to the population-averaged cumulative transition intensities $A_{0,hj}(t)=\int_0^t\{E[M_1\check{Y}_{1m,h}(u)]\}^{-1}dE[M_1\check{N}_{1m,hj}(u)]$, $h\neq j$, with $A_{0,hh}(t)=-\sum_{j\neq h}A_{0,hj}(t)$, and $A_{0,hj}'(t)=\int_0^t\{E[\check{Y}_{1m,h}(u)]\}^{-1}dE[\check{N}_{1m,hj}(u)]$, $h\neq j$, with $A_{0,hh}'(t)=-\sum_{j\neq h}A_{0,hj}'(t)$. Based on the corresponding population-averaged matrices $\mathbf{A}_{0}(t)$ and $\mathbf{A}_{0}'(t)$, the population-averaged transition probability matrices can be expressed as the product integrals (which are the solution to the Kolmogorov forward equations) $\mathbf{P}_{0}(s,t)=\prodi_{(s,t]}[\mathbf{I}_k+d\mathbf{A}_{0}(u)]$, $0\leq s\leq t \leq\tau$ and $\mathbf{P}_{0}'(s,t)=\prodi_{(s,t]}[\mathbf{I}_k+d\mathbf{A}_{0}'(u)]$, $0\leq s\leq t \leq\tau$. It is important to note that the most appropriate estimand depends on the scientific question of interest. In the special case where cluster size is either non-informative or constant $\mathbf{P}_0=\mathbf{P}_0'$ and $P_{0,j}=P_{0,j}'$, $j\in\mathcal{S}$. %Asymptotic properties of the proposed methods are studied as the number of clusters $n\rightarrow\infty$.

\subsection{Estimation of transition probabilities}
\label{ss:TP}
Consistent nonparametric estimation of the population-averaged transition probability matrices $\mathbf{P}_0$ and $\mathbf{P}_0'$ can be achieved by plugging consistent nonparametric estimators of the cumulative transition intensity matrices $\mathbf{A}_0$ and $\mathbf{A}_0'$ into the corresponding product integrals defined in \ref{ss:clust}. Let $N_{i\cdot,hj}(t)\equiv \sum_{m=1}^{M_i}N_{im,hj}(t)$ and $\check{N}_{i\cdot,hj}(t)\equiv \sum_{m=1}^{M_i}\check{N}_{im,hj}(t)$, for $h\neq j$. Similarly, let $Y_{i\cdot,h}(t)\equiv \sum_{m=1}^{M_i}Y_{im,h}(t)$ and $\check{Y}_{i\cdot,h}(t)\equiv \sum_{m=1}^{M_i}\check{Y}_{im,h}(t)$, for $h\in\mathcal{T}^c$. In Appendix A.1 we show that $E[\check{N}_{1\cdot,hj}(t)]=E[M_1\check{N}_{1m,hj}(t)]$, $h\neq j$, and $E[\check{Y}_{1\cdot,h}(t)]=E[\check{M}_1Y_{1m,h}(t)]$, $h\in\mathcal{T}^c$, $t\in[0,\tau]$, for any cluster member $m=1,\ldots,M_1$. This implies that $A_{0,hj}(t)=\int_0^t\{E[\check{Y}_{1\cdot,h}(u)]\}^{-1}dE[\check{N}_{1\cdot,hj}(u)]$, $h\neq j$. Furthermore, we show in Appendix A.2 that, under independent right censoring and left truncation, $A_{0,hj}(t)=\int_0^t\{E[Y_{1\cdot,h}(u)]\}^{-1}dE[N_{1\cdot,hj}(u)]$, $h\neq j$. Therefore, a natural estimator of $A_{0,hj}(t)$ is
\[
\hat{A}_{n,hj}(t)=\int_0^t\frac{d\left[\sum_{i=1}^n{N_{i\cdot,hj}(u)}\right]}{\sum_{i=1}^nY_{i\cdot,h}(u)}, \ \ \ \ h\neq j, \ \ t\in[0,\tau].
\]
Similar arguments lead to the conclusion that $A_{0,hj}'(t)=\int_0^t\{E[M_i^{-1}\check{Y}_{i\cdot,h}(u)]\}^{-1}dE[M_i^{-1}\check{N}_{i\cdot,hj}(u)]=\int_0^t\{E[M_i^{-1}Y_{i\cdot,h}(u)]\}^{-1}dE[M_i^{-1}N_{i\cdot,hj}(u)]$, $h\neq j$. Therefore, a natural nonparametric estimator of $A_{0,hj}'(t)$ is
\[
\hat{A}_{n,hj}'(t)=\int_0^t\frac{d\left[\sum_{i=1}^nM_i^{-1}N_{i\cdot,hj}(u)\right]}{\sum_{i=1}^nM_i^{-1}Y_{i\cdot,h}(u)}, \ \ \ \ h\neq j, \ \ t\in[0,\tau].
\]
Then, the proposed plug-in estimators of $\mathbf{P}_0$ and $\mathbf{P}_0'$ are
\[
\hat{\mathbf{P}}_n(s,t)=\Prodi_{(s,t]}[\mathbf{I}_k+d\hat{\mathbf{A}}_n(u)] \ \ \ \ \textrm{and} \ \ \ \ \hat{\mathbf{P}}_n'(s,t)=\Prodi_{(s,t]}[\mathbf{I}_k+d\hat{\mathbf{A}}_n'(u)],
\]
where $\hat{\mathbf{A}}_n(t)$ and $\hat{\mathbf{A}}_n'(t)$ are the $k\times k$ matrices with off-diagonal elements $\hat{A}_{n,hj}(t)$ and $\hat{A}_{n,hj}'(t)$, and diagonal elements $-\sum_{j\neq h}\hat{A}_{n,hj}(t)$ and $-\sum_{j\neq h}\hat{A}_{n,hj}'(t)$, $h=1,\ldots,k$, respectively. The estimator $\hat{\mathbf{P}}_n$ can be seen as the working independence Aalen--Johansen estimator. We call $\hat{\mathbf{P}}_n'$ the weighted by cluster size working independence Aalen--Johansen estimator. The following theorem states that the proposed estimators $\hat{\mathbf{P}}_n$ and $\hat{\mathbf{P}}_n'$ are uniformly consistent for the corresponding true population-averaged transition probability matrices $\mathbf{P}_0$ and $\mathbf{P}_0'$.

\begin{theorem}
Suppose that conditions C1--C5 in Appendix A.1 hold and define the norm $\|\mathbf{A}\|=\sup_{l}\sum_r|a_{lr}|$ for some matrix $\mathbf{A}=[a_{lr}]$. Then
\[
\sup_{t\in[s,\tau]}\left\|\hat{\mathbf{P}}_n(s,t)-\mathbf{P}_0(s,t)\right\|\overset{as*}\rightarrow 0 \ \ \ \ \textrm{and} \ \ \ \ \sup_{t\in[s,\tau]}\left\|\hat{\mathbf{P}}_n'(s,t)-\mathbf{P}_0'(s,t)\right\|\overset{as*}\rightarrow 0,
\]
for any $s\in[0,\tau]$, as $n\rightarrow\infty$.
\end{theorem}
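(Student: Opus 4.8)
The plan is to exploit the fact that $\hat{\mathbf{P}}_n$ and $\hat{\mathbf{P}}_n'$ are the images of the cumulative intensity estimators $\hat{\mathbf{A}}_n$ and $\hat{\mathbf{A}}_n'$ under the product-integral map, which is a continuous (indeed locally Lipschitz) functional on matrix-valued functions of bounded variation equipped with the supremum norm. The argument therefore splits into two parts: (i) establishing uniform strong consistency of the intensity estimators, $\sup_{t\in[0,\tau]}\|\hat{\mathbf{A}}_n(t)-\mathbf{A}_0(t)\|\overset{as*}\rightarrow 0$ and analogously for the primed version; and (ii) transferring this to the transition probability matrices via continuity of the product integral.

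For part (i), I would write $\hat{A}_{n,hj}(t)=\int_0^t [\bar Y_{n,h}(u)]^{-1}\,d\bar N_{n,hj}(u)$, where $\bar N_{n,hj}=n^{-1}\sum_i N_{i\cdot,hj}$ and $\bar Y_{n,h}=n^{-1}\sum_i Y_{i\cdot,h}$ are empirical means of i.i.d.\ cluster-level processes. Since each $N_{i\cdot,hj}$ is monotone nondecreasing and each $Y_{i\cdot,h}$ is of uniformly bounded variation with an integrable envelope, these processes form Glivenko--Cantelli classes; hence, under the integrability conditions among C1--C5, the uniform strong law of large numbers gives $\sup_u |\bar N_{n,hj}(u)-E N_{1\cdot,hj}(u)|\overset{as*}\rightarrow 0$ and $\sup_u|\bar Y_{n,h}(u)-E Y_{1\cdot,h}(u)|\overset{as*}\rightarrow 0$. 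Invoking the positivity condition $\inf_u E Y_{1\cdot,h}(u)>0$ (included in C1--C5), the denominators are eventually bounded away from zero, and the continuity of the map $(f,g)\mapsto \int_0^\cdot g^{-1}\,df$ from pairs of bounded-variation functions (with $g$ bounded below) into bounded-variation functions, with respect to the supremum norm, yields $\sup_t|\hat A_{n,hj}(t)-A_{0,hj}(t)|\overset{as*}\rightarrow 0$. Summing over the finitely many off-diagonal entries and using the definition of the diagonal elements gives uniform consistency of $\hat{\mathbf{A}}_n$ in the stated matrix norm. For $\hat{\mathbf{A}}_n'$ the identical argument applies to the reweighted i.i.d.\ cluster processes $M_i^{-1}N_{i\cdot,hj}$ and $M_i^{-1}Y_{i\cdot,h}$, whose expectations reproduce $\mathbf{A}_0'$ by the representation established in Section~\ref{ss:TP}.

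For part (ii), I would appeal to the Duhamel equation and the associated Lipschitz bound for product integrals \citep{Andersen12}: there is a constant $C$ depending only on a uniform bound for the total variations of $\mathbf{A}_0$ and $\hat{\mathbf{A}}_n$ such that
\[
\sup_{t\in[s,\tau]}\left\|\Prodi_{(s,t]}[\mathbf{I}_k+d\hat{\mathbf{A}}_n(u)]-\Prodi_{(s,t]}[\mathbf{I}_k+d\mathbf{A}_0(u)]\right\|\leq C\sup_{t\in[s,\tau]}\left\|\hat{\mathbf{A}}_n(t)-\mathbf{A}_0(t)\right\|.
\]
Since $\mathbf{A}_0$ has bounded variation on $[0,\tau]$ by assumption and the total variation of $\hat{\mathbf{A}}_n$ is asymptotically controlled by that of $\mathbf{A}_0$ (again using the uniform convergence of the at-risk averages away from zero), the constant $C$ can be taken finite almost surely for all large $n$. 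Combining with part (i) gives the first displayed convergence, and the primed case is identical.

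The main obstacle is part (i), specifically the uniform handling of the denominator $\bar Y_{n,h}$: one must simultaneously guarantee the uniform strong law of large numbers for the cluster-level at-risk averages and ensure that the limiting denominator $E Y_{1\cdot,h}$ stays bounded away from zero on $[0,\tau]$, so that the ratio and its integral remain well behaved uniformly. This is exactly where the positivity and integrability conditions C1--C5 (in particular the moment conditions on cluster size needed to handle the reweighted processes $M_i^{-1}Y_{i\cdot,h}$ under informative cluster size) are indispensable; once these are in force, the product-integral continuity invoked in part (ii) is a standard, off-the-shelf step.
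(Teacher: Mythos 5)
Your overall architecture matches the paper's: uniform strong consistency of the cumulative intensity estimators followed by continuity of the product-integral map (the paper simply invokes continuity where you use the quantitative Duhamel/Lipschitz bound, which is fine). However, there is a genuine gap in part (i). The positivity condition you invoke, $\inf_{t\in[0,\tau]}E Y_{1\cdot,h}(t)>0$, is \emph{not} among C1--C5: in the paper that full-interval positivity is condition C6, which is deliberately reserved for the weak-convergence results (Theorems 2 and 3) and is even relaxed in Appendix A.5. Condition C4 only guarantees $\inf_{t\in J_h}E[Y_{1\cdot,h}(t)]>0$ on a compact convex set $J_h\subset[0,\tau]$, together with $\int_{(0,t]\cap J_h^c}dA_{0,hj}(t)=0$. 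Under C1--C5 alone the denominator $EY_{1\cdot,h}$ may vanish on part of $[0,\tau]$ (e.g., a transient state with zero occupation probability early or late in the study), so your claim that the empirical denominators are eventually bounded away from zero uniformly on $[0,\tau]$ fails, and with it your continuity-of-the-ratio-map step. The paper's proof circumvents this by exploiting that both $A_{0,hj}$ and the counting measure charge only $J_h$, so that $[PY_{\cdot,h}(t)]^{-1}\leq V$ holds almost everywhere with respect to the Lebesgue--Stieltjes measure $\mu_{N_{\cdot,hj}}$ (and $[\mathbb{P}_nY_{\cdot,h}]^{-1}$ is a.e.\ bounded with probability one); this measure-restricted boundedness is what makes the decomposition into the two terms $Q_{n,1}$ and $Q_{n,2}$ go through on all of $[0,\tau]$. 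Your argument can be repaired along these lines, but as written it proves the theorem only under the stronger condition C6.

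A second, related omission: you never identify the limit of $\hat{A}_{n,hj}$ with the estimand. Your uniform law of large numbers yields convergence of $\hat{A}_{n,hj}$ to $\int_0^{\cdot}[EY_{1\cdot,h}(u)]^{-1}dEN_{1\cdot,hj}(u)$, where $N$ and $Y$ are the right-censored/left-truncated observable processes, whereas $A_{0,hj}$ is defined through the complete processes $\check{N}$ and $\check{Y}$. Establishing equality of these two functionals is the first half of the paper's proof: the jump-time representation of $N_{im,hj}$, conditions C1 and C2 giving $EN_{i\cdot,hj}(t)=\int_0^tE[R_{i1}(u)]\,dE\check{N}_{i\cdot,hj}(u)$ and $EY_{i\cdot,h}(t)=E[R_{i1}(t)]E[\check{Y}_{i\cdot,h}(t)]$, so that the censoring/truncation factor $ER_{i1}(u)$ cancels in the ratio over $J_h$; the conditional exchangeability in C3--C4 then connects $E\check{N}_{1\cdot,hj}$ to $E[M_1\check{N}_{1m,hj}]$, and similarly for the primed, cluster-size-weighted quantities. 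You appeal to ``the representation established in Section 2.3,'' but the main text explicitly defers the proof of that representation to Appendix A.2 --- i.e., to the proof of this very theorem --- so in a blind attempt this step is assumed rather than proved. Your Glivenko--Cantelli reasoning for the cluster-level averages and the product-integral step in part (ii) are otherwise sound and correspond to the paper's use of its Lemma 1 (via bounded uniform entropy) and the continuity of the product integral.
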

The proof of Theorem 1 can be found in Appendix A.2. It has to be noted that, even though the standard Aalen--Johansen estimator under the working independence assumption is uniformly consistent for $\mathbf{P}_0$, the usual standard error estimators for the Aalen--Johansen estimator are invalid with clustered data as they ignore the within-cluster dependence. %This issue is illustrated in the simulation studies in Section \ref{s:sims}.

Theorem 2 provides the basis for valid inference about the components of $\mathbf{P}_0$ and $\mathbf{P}_0'$. Before stating Theorem 2 we define the functions
\[
\gamma_{ihj}(s,t)=\sum_{l\in\mathcal{T}^c}\sum_{q\in\mathcal{S}}\int_s^t\frac{P_{0,hl}(s,u-)P_{0,qj}(u,t)}{E\left[Y_{1\cdot,l}(u)\right]}d\bar{M}_{ilq}(u), \ \ \ \ 0\leq s\leq t \leq \tau,
\]
for $h\in\mathcal{T}^c$ and $j\in\mathcal{S}$, with $h\neq j$, where $\bar{M}_{ilq}(t)=N_{i\cdot,lq}(t)-\int_{(0,t]}Y_{i\cdot,l}(u)dA_{0,lq}(u)$%, $\tilde{\tau}_1=\max_{h\in\mathcal{T}^c}\{\tau_{1h}\}$, and $\tilde{\tau}_2=\min_{h\in\mathcal{T}^c}\{\tau_{2h}\}$
. If $h=j$, then $\gamma_{ihh}(s,t)=-\sum_{j\neq h}\gamma_{ihj}(s,t)$. Also, define
\[
\gamma_{ihj}'(s,t)=\sum_{l\in\mathcal{T}^c}\sum_{q\in\mathcal{S}}\int_s^t\frac{P_{0,hl}'(s,u-)P_{0,qj}'(u,t)}{E\left[M_1^{-1}Y_{1\cdot,l}(u)\right]}d\bar{M}_{ilq}'(u), \ \ \ \ 0\leq s\leq t \leq \tau,
\]
for $h\in\mathcal{T}^c$ and $j\in\mathcal{S}$, with $h\neq j$, where $\bar{M}_{ilq}'(t)=M_i^{-1}[N_{i\cdot,lq}(t)-\int_{(0,t]}Y_{i\cdot,l}(u)dA_{0,lq}'(u)]$. If $h=j$, then $\gamma_{ihh}'(s,t)=-\sum_{j\neq h}\gamma_{ihj}'(s,t)$. Next, define the estimated process $\hat{B}_{n,hj}(s,\cdot)=n^{-1/2}\sum_{i=1}^n\hat{\gamma}_{ihj}(s,\cdot)\xi_i$, for $h\in\mathcal{T}^c$ and $j\in\mathcal{S}$, where $\xi_i$, $i=1,\ldots,n$, are i.i.d. standard normal random variables, and $\hat{\gamma}_{ihj}(s,\cdot)$ is an estimated version of $\gamma_{ihj}(s,\cdot)$ where unknown quantities have been replaced by their uniformly consistent estimates and expectations by sample averages. Similarly, we define the estimated process $\hat{B}_{n,hj}'(s,\cdot)=n^{-1/2}\sum_{i=1}^n\hat{\gamma}_{ihj}'(s,\cdot)\xi_i$, for $h\in\mathcal{T}^c$ and $j\in\mathcal{S}$. %Clearly, the aforementioned estimated processes involve two sources of randomness: i) the observed data and ii) the standard normal variates $\xi$.
These estimated processes will be shown useful for the calculation of simultaneous confidence bands and $p$-values for the two-sample comparison problem. These procedures utilize the notion weak convergence of conditional laws of the processes $\hat{B}_{n,hj}(s,\cdot)$ and $\hat{B}_{n,hj}'(s,\cdot)$ conditionally on the observed data \citep[see][]{Kosorok08}. Clearly, conditionally on the observed data, the only source of randomness are the standard normal variates $\xi_i$. %Weak convergence of conditional laws of a random sequence $G_n(\xi,\mathbf{O})$, that depends on the simulation realizations $\xi$ and the observed data $\mathbf{O}$, to a tight process $G$ in some metric space $(\mathbb{D},d)$ is defined as follows \citep{Kosorok08}
%\[
%\sup_{f\in BL_1}\left|E_{\xi}f[G_n(\xi,\mathbf{O})]-Ef(G)\right|\overset{p}%\rightarrow 0,
%\]
%where $BL_1$ is the space of Lipschitz functions $f:\mathbb{D}\mapsto\mathbb{R}$, with Lipschitz norm bounded by 1, and $E_{\xi}$ denotes conditional expectation with respect to the simulation realizations $\xi$, given the observed data $\mathbf{O}$. This type of weak convergence is denoted as $G_n(\xi,\mathbf{O})\overset{p}{\underset{\xi}\leadsto} G$.
Weak convergence of conditional laws is denoted as $\overset{p}{\underset{\xi}\leadsto}$.

An alternative approach for simultaneous confidence bands and calculation of $p$-values is the nonparametric cluster bootstrap. The nonparametric cluster bootstrap versions of the proposed estimators are $\hat{\mathbf{P}}_n^*(s,t)=\Prodi_{(s,t]}[\mathbf{I}_k+d\hat{\mathbf{A}}_n^*(u)]$ and $\hat{\mathbf{P}}_n^{\prime *}(s,t)=\Prodi_{(s,t]}[\mathbf{I}_k+d\hat{\mathbf{A}}_n^{\prime *}(u)]$, where $\hat{\mathbf{A}}_n^*(t)$ and $\hat{\mathbf{A}}_n^{\prime *}(t)$ involve the components
\[
\hat{A}_{n,hj}^*(t)=\int_0^t\frac{d\left[\sum_{i=1}^nU_{ni}{N_{i\cdot,hj}(u)}\right]}{\sum_{i=1}^nU_{ni}Y_{i\cdot,h}(u)}, \ \ \ \ h\neq j, \ \ t\in[0,\tau],
\]
and
\[
\hat{A}_{n,hj}^{\prime *}(t)=\int_0^t\frac{d\left[\sum_{i=1}^nU_{ni}M_i^{-1}{N_{i\cdot,hj}(u)}\right]}{\sum_{i=1}^nU_{ni}M_i^{-1}Y_{i\cdot,h}(u)}, \ \ \ \ h\neq j, \ \ t\in[0,\tau],
\]
respectively. $(U_{n1},\ldots,U_{nn})$ is a random vector from the multinomial distribution with $n$ trials and probabilities $1/n$ for each trial. Calculation of a bootstrap realization $\hat{\mathbf{P}}_n^*$ and $\hat{\mathbf{P}}_n^{\prime *}$ can be easily performed by randomly sampling $n$ clusters with replacement from the original data set, followed by the calculation of the proposed estimators based on the resulting bootstrap data set. Weak convergence of conditional laws of the nonparametric cluster bootstrap processes is defined, conditionally on the observed data, with respect to the multinomial bootstrap weights $U$ and is denoted as $\overset{p}{\underset{U}\leadsto}$.

\begin{theorem}
Suppose that conditions C1--C6 in Appendix A.1 hold. Then, for any $h\in\mathcal{T}^c$, $j\in\mathcal{S}$, and $s\in[0,\tau)$,
\begin{itemize}
\item[(i)] $
\sqrt{n}[\hat{P}_{n,hj}(s,t)-P_{0,hj}(s,t)]=n^{-1/2}\sum_{i=1}^n\gamma_{ihj}(s,t)+o_p(1)$ and \\
$\sqrt{n}[\hat{P}_{n,hj}'(s,t)-P_{0,hj}'(s,t)]=n^{-1/2}\sum_{i=1}^n\gamma_{ihj}'(s,t)+o_p(1)$, $t\in[s,\tau]$. Moreover, the classes of functions $\{\gamma_{ihj}(s,t):t\in[s,\tau]\}$ and $\{\gamma_{ihj}'(s,t):t\in[s,\tau]\}$ are $P$-Donsker.
\item[(ii)] $\hat{B}_{hj}(s,\cdot)\overset{p}{\underset{\xi}\leadsto}\mathbb{G}_{hj}(s,\cdot)$ and $\sqrt{n}[\hat{P}_{n,hj}^*(s,\cdot)-\hat{P}_{n,hj}(s,\cdot)]\overset{p}{\underset{U}\leadsto}\mathbb{G}_{hj}(s,\cdot)$ in $D[s,\tau]$, where $\mathbb{G}_{hj}(s,\cdot)$ is the asymptotic tight limit of the process $\sqrt{n}[\hat{P}_{n,hj}(s,\cdot)-P_{0,hj}(s,\cdot)]$.
\item[(iii)] $\hat{B}_{hj}'(s,\cdot)\overset{p}{\underset{\xi}\leadsto}\mathbb{G}_{hj}'(s,\cdot)$ and $\sqrt{n}[\hat{P}_{n,hj}^{\prime *}(s,\cdot)-\hat{P}_{n,hj}'(s,\cdot)]\overset{p}{\underset{U}\leadsto}\mathbb{G}_{hj}'(s,\cdot)$ in $D[s,\tau]$, where $\mathbb{G}_{hj}'(s,\cdot)$ is the asymptotic tight limit of the process $\sqrt{n}[\hat{P}_{n,hj}'(s,\cdot)-P_{0,hj}'(s,\cdot)]$.
\end{itemize}
\end{theorem}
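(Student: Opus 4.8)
The plan is to treat each cluster as a single i.i.d.\ sampling unit and to reduce the entire problem to empirical process theory for i.i.d.\ observations, since by assumption the cluster-level aggregated processes $N_{i\cdot,hj}$, $Y_{i\cdot,h}$ (and their $M_i^{-1}$-weighted analogues) are i.i.d.\ across $i=1,\ldots,n$. The first step is to obtain an asymptotic linear expansion for the cumulative transition intensity estimator. Writing the estimation error as
\[
\hat A_{n,hj}(t)-A_{0,hj}(t)=\int_0^t\frac{\sum_{i=1}^n d\bar M_{ihj}(u)}{\sum_{i=1}^n Y_{i\cdot,h}(u)},
\]
I would first note that the cluster-level residuals $\bar M_{ihj}$ are i.i.d.\ and mean zero, the latter being exactly the content of the identity $A_{0,hj}(t)=\int_0^t\{E[Y_{1\cdot,h}(u)]\}^{-1}dE[N_{1\cdot,hj}(u)]$ established before the theorem. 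A uniform law of large numbers for the bounded-variation processes $Y_{i\cdot,h}$, together with the condition that $E[Y_{1\cdot,h}(u)]$ is bounded away from zero on the relevant interval (conditions C1--C6), lets me replace the empirical denominator $n^{-1}\sum_i Y_{i\cdot,h}$ by its limit $E[Y_{1\cdot,h}]$ with a remainder that is $o_p(1)$ after multiplication by $\sqrt n$, yielding $\sqrt n[\hat A_{n,hj}(t)-A_{0,hj}(t)]=n^{-1/2}\sum_i\int_0^t\{E[Y_{1\cdot,h}(u)]\}^{-1}d\bar M_{ihj}(u)+o_p(1)$, uniformly in $t$.

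Next I would pass from intensities to transition probabilities via the functional delta method. The map $\mathbf A\mapsto\prodi_{(s,\cdot]}[\mathbf I_k+d\mathbf A]$ is Hadamard differentiable with derivative given by the Duhamel equation, so composing this derivative with the linear expansion of $\hat{\mathbf A}_n$ produces precisely the stated influence function $\gamma_{ihj}(s,t)=\sum_{l,q}\int_s^t \{E[Y_{1\cdot,l}(u)]\}^{-1}P_{0,hl}(s,u-)P_{0,qj}(u,t)\,d\bar M_{ilq}(u)$ and establishes the representation in part (i). For the $P$-Donsker claim, each $\gamma_{ihj}(s,\cdot)$ is a c\`adl\`ag function of uniformly bounded variation, being built from products and integrals of bounded, bounded-variation factors; the class indexed by $t\in[s,\tau]$ is therefore contained in a universally Donsker class of uniformly-bounded-variation functions, and the standard Donsker theorem for i.i.d.\ sums gives weak convergence of $\sqrt n[\hat P_{n,hj}(s,\cdot)-P_{0,hj}(s,\cdot)]$ to a tight Gaussian process $\mathbb G_{hj}(s,\cdot)$. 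The primed quantities are handled identically after replacing $N_{i\cdot},Y_{i\cdot}$ by $M_i^{-1}N_{i\cdot},M_i^{-1}Y_{i\cdot}$ and the corresponding expectations.

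For the two resampling schemes in parts (ii) and (iii) I would invoke the Donsker property just established. For the multiplier process, the conditional multiplier central limit theorem for Donsker classes gives $n^{-1/2}\sum_i\gamma_{ihj}(s,\cdot)\xi_i\overset{p}{\underset{\xi}\leadsto}\mathbb G_{hj}(s,\cdot)$; it then remains to show the plug-in error is asymptotically negligible, i.e.\ $n^{-1/2}\sum_i[\hat\gamma_{ihj}(s,\cdot)-\gamma_{ihj}(s,\cdot)]\xi_i=o_p(1)$ conditionally, which follows from the uniform consistency of $\hat{\mathbf P}_n$, $\hat{\mathbf A}_n$ and of the sample averages replacing the expectations in $\gamma_{ihj}$. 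For the nonparametric cluster bootstrap, the multinomial reweighting is exactly Efron's bootstrap applied at the i.i.d.\ cluster level, so the bootstrap central limit theorem for Donsker classes together with the bootstrap version of the functional delta method (again using Hadamard differentiability of the product integral) yields $\sqrt n[\hat P_{n,hj}^*(s,\cdot)-\hat P_{n,hj}(s,\cdot)]\overset{p}{\underset{U}\leadsto}\mathbb G_{hj}(s,\cdot)$, and likewise for the primed version.

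The main obstacle, and the reason the classical Aalen--Johansen asymptotics cannot simply be quoted, is that under arbitrary within-cluster dependence the residuals $\bar M_{ilq}$ are \emph{not} martingales, so the predictable-variation and martingale-central-limit machinery is unavailable; everything must instead be built on cluster-level empirical process arguments. Within this program the two delicate points are (a) the uniform negligibility of the remainder obtained when the empirical at-risk denominator is replaced by its expectation, which hinges on that expectation being bounded away from zero on $[0,\tau]$ and on a uniform law of large numbers for the aggregated at-risk processes, and (b) verifying the measurability and conditional-convergence hypotheses needed to push the multiplier and bootstrap limit theorems through the nonlinear product-integral map; for the primed estimator one additionally needs moment control on the weights $M_i^{-1}$ to keep the relevant classes Donsker and the denominators well behaved.
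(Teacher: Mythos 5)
Your blueprint coincides with the paper's proof in all its main moves: treat clusters as i.i.d.\ units, linearize $\hat{A}_{n,hj}$ around $A_{0,hj}$ with influence function $\int_{(0,t]}[PY_{\cdot,h}(u)]^{-1}d\bar{M}_{hj}(u)$, push this through the Hadamard-differentiable product integral (Duhamel) to obtain $\gamma_{ihj}$, then use the conditional multiplier CLT and the bootstrap CLT plus bootstrap functional delta method (the paper cites Theorem 12.1 and the bootstrap continuous mapping theorem in Kosorok). Two of your steps are loosely packaged versions of the paper's: your denominator-replacement argument is exactly what the paper's appeal to Hadamard differentiability of $(F_1,F_2)\mapsto\int F_1^{-1}dF_2$ encapsulates, and your ``uniformly-bounded-variation'' Donsker claim needs restating --- the variation is in the index $t$, not in the data, so the right tool is BUEI plus pointwise measurability for monotone c\`adl\`ag processes with square-integrable envelope (Kosorok's Lemma 22.4), after decomposing the paths into monotone pieces; this is precisely the covering-number machinery the paper isolates in its Lemmas 1 and 2. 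One also has to verify that the $t$-dependence \emph{inside} the integrand, through $P_{0,qj}(u,t)$, has uniformly bounded variation, which holds because $P_0$ solves a forward equation with total intensity $A_0(\tau)<\infty$; your sketch does not mention this, but it is a routine check.

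The genuine gap is in part (ii). You claim that $n^{-1/2}\sum_{i=1}^n[\hat{\gamma}_{ihj}(s,\cdot)-\gamma_{ihj}(s,\cdot)]\xi_i=o_p(1)$ ``follows from the uniform consistency'' of the plugged-in quantities. That is insufficient for the component of $\hat{\gamma}-\gamma$ in which the estimated compensator appears as \emph{integrator}: after scaling, it has the form $\int_s^t \{P_{0,hl}(s,u-)P_{0,qj}(u,t)/PY_{\cdot,l}(u)\}\,[\sqrt{n}\,\mathbb{P}_nY_{\cdot,l}(u)\xi]\,d[\hat{A}_{n,lq}(u)-A_{0,lq}(u)]$ (the paper's term $\tilde{Q}_{n,lq3}$). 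A crude bound gives only $O_p(1)$ here: the total variation of $\hat{A}_{n,lq}-A_{0,lq}$ does \emph{not} vanish (it is a step function minus a continuous function, with variation of order $2A_{0,lq}(\tau)$), while the multiplier process $\sqrt{n}\,\mathbb{P}_nY_{\cdot,l}\xi$ is merely $O_p(1)$ uniformly, so uniform consistency of $\hat{A}_n$ buys you nothing. The paper closes this term by combining the weak convergence of the multiplier process with the argument of Kosorok's Proposition 7.27, concluding via convergence in distribution to the constant zero (hence in probability). Your closing paragraph gestures at ``measurability and conditional-convergence hypotheses'' but does not identify this term, and without a Proposition 7.27-type step (or an equivalent integration-by-parts argument exploiting continuity of the limit process) the multiplier halves of parts (ii) and (iii) are not established; the bootstrap halves, by contrast, are fine as you describe them since they avoid estimated influence functions altogether.
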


The proof of Theorem 2 can be found in the Appendix A.3. In Appendix A.5 we consider an alternative weak convergence theorem for situations where condition C6 does not hold. Theorem 2 implies that, for any $h\in\mathcal{T}^c$, $j\in\mathcal{S}$, and $s\in[0,\tau)$, $\sqrt{n}[\hat{P}_{n,hj}(s,\cdot)-P_{0,hj}(s,\cdot)]$ and $\sqrt{n}[\hat{P}_{n,hj}(s,\cdot)-P_{0,hj}(s,\cdot)]$ converge weakly to the tight mean-zero Gaussian processes $\mathbb{G}_{hj}(s,\cdot)$ and $\mathbb{G}_{hj}'(s,\cdot)$, respectively, in $D[s,\tau]$. The asymptotic covariance functions of $\mathbb{G}_{hj}(s,\cdot)$ and $\mathbb{G}_{hj}'(s,\cdot)$ at the time points $t_1$ and $t_2$ are $E[\gamma_{1hj}(s,t_1)\gamma_{1hj}(s,t_2)]$ and $E[\gamma_{1hj}'(s,t_1)\gamma_{1hj}'(s,t_2)]$. These covariance functions can be consistently (in probability) estimated by $n^{-1}\sum_{i=1}^n\hat{\gamma}_{ihj}(s,t_1)\hat{\gamma}_{ihj}(s,t_2)$ and $n^{-1}\sum_{i=1}^n\hat{\gamma}_{ihj}'(s,t_1)\hat{\gamma}_{ihj}'(s,t_2)$, respectively. Theorem 2 also implies that the asymptotic distributions of the estimators can be easily approximated by simulating realizations of the processes $\hat{B}_{hj}(s,\cdot)$ and $\hat{B}_{hj}'(s,\cdot)$, or by bootstrap realizations $\sqrt{n}[\hat{P}_{n,hj}^{*}(s,\cdot)-\hat{P}_{n,hj}(s,\cdot)]$ and $\sqrt{n}[\hat{P}_{n,hj}^{\prime *}(s,\cdot)-\hat{P}_{n,hj}'(s,\cdot)]$. This can be easily performed, conditionally on the observed data, by simulating a large number of sets of standard normal variates $\{\xi_i\}_{i=1}^n$ or multinomial vectors $\mathbf{U}_n$, and then calculating the corresponding realizations of the aforementioned processes.

These results can be used for the calculation of pointwise confidence intervals and simultaneous confidence bands for the transition probabilities. For these procedures it is important to consider a differentiable transformation $g$, such as $g(x)=\log[-\log(x)]$, to ensure that the limits of the confidence interval and the confidence band lie in the interval $(0,1)$. For the calculation of simultaneous confidence bands for $P_{0,hj}(s,\cdot)$, for $h\in\mathcal{T}^c$, $j\in\mathcal{S}$, and $s\in[0,\tau)$, it is useful to consider a weight function $\hat{q}_{hj}(s,t)$ that converges uniformly (in probability) to a bounded non-negative function on an interval $[t_1,t_2]\subset[s,\tau]$. A choice %for the weight function
is $\hat{q}_{hj}(s,t)=\{1+n^{-1}\sum_{i=1}^n[\hat{\gamma}_{1hj}(s,t)^2]\}^{-1}$, where, as argued above, $n^{-1}\sum_{i=1}^n[\hat{\gamma}_{1hj}(s,\cdot)^2]$ is uniformly consistent for the true asymptotic variance function of $\sqrt{n}[\hat{P}_{n,hj}(s,\cdot)-P_{0,hj}(s,\cdot)]$. By Theorem 2, the functional delta method, and the continuous mapping theorem it follows that the random sequences $
\sup_{t\in[t_1,t_2]}\left|\sqrt{n}\hat{q}_{hj}(s,t)[g(\hat{P}_{n,hj}(s,t))-g(P_{0,hj}(s,t))]\right|$
and
\[
\sup_{t\in[t_1,t_2]}\left|\hat{q}_{hj}(s,t)\dot{g}(P_{0,hj}(s,t))\sqrt{n}[\hat{P}_{n,hj}(s,t))-P_{0,hj}(s,t)]\right|,
\]
have the same limiting distribution. Under Theorem 2, the $1-\alpha$ percentile of this limiting distribution, denoted by $c_{\alpha}$, can be estimated as the sample percentile $\hat{c}_{\alpha}$ of a sufficiently large sample of simulation realizations of $\sup_{t\in[t_1,t_2]}|\hat{q}_{hj}(s,t)\dot{g}(P_{0,hj}(s,t))\hat{B}_{hj}(s,t)|$, or bootstrap realizations $
\sup_{t\in[t_1,t_2]}\left|\hat{q}_{hj}(s,t)\dot{g}(P_{0,hj}(s,t))\sqrt{n}[\hat{P}_{n,hj}^*(s,t)-\hat{P}_{n,hj}(s,t)]\right|$. Based on this $\hat{c}_{\alpha}$, a $1-\alpha$ simultaneous confidence band can be calculated as
\[
g^{-1}\left\{g(\hat{P}_{n,hj}(s,t))\pm\frac{\hat{c}_{\alpha}}{\sqrt{n}\hat{q}_{hj}(s,t)}\right\}, \ \ \ \ t\in[t_1,t_2].
\]
In general, simultaneous confidence bands can be unstable towards the earlier or later times of the observation interval \citep{Nair84}. To avoid this issue in practice we suggest restricting the domain of the confidence band to a set with limits the 10th and 90th or the 5th and 95th percentile of the distribution of unique jump times of the counting processes $N_{im,hj}(t)$. Calculation of simultaneous confidence bands for $P_{0,hj}'(s,\cdot)$, for $h\in\mathcal{T}^c$, $j\in\mathcal{S}$, and $s\in[0,\tau)$ can be performed in a similar manner.

\subsection{Estimation of state occupation probabilities}
\label{ss:SOP}
In many applications, state occupation probabilities are more scientifically relevant compared to transition probabilities. In this subsection we provide estimators and inference procedures for the population-averaged state occupation probabilities. Natural plug-in estimators for the population-averaged state occupation probabilities are 
\[
\hat{P}_{n,j}(t)=\sum_{h\in\mathcal{T}^c}\left\{\frac{\sum_{i=1}^nY_{i\cdot,h}(0+)}{\hat{\pi}_n\sum_{i=1}^nM_i}\right\}\hat{P}_{n,hj}(0,t),  \ \ \ \ j\in\mathcal{S},
\]
 where $\hat{\pi}_n=n^{-1}\sum_{i=1}^nM_i^{-1}\sum_{h\in\mathcal{T}^c}Y_{i\cdot,h}(0+)$, and 
\[
\hat{P}_{n,j}'(t)=\sum_{h\in\mathcal{T}^c}\left\{\frac{\sum_{i=1}^nM_i^{-1}Y_{i\cdot,h}(0+)}{n\hat{\pi}_n}\right\}\hat{P}_{n,hj}'(0,t), \ \ \ \ j\in\mathcal{S}.
\]
In these estimators, $\hat{\pi}_n$ is a consistent estimate of the probability of being under observation at time $t=0$, denoted as $\pi_0$. Here, we also assume that $\pi_0>0$. In the absence of left truncation $\hat{\pi}_n=\pi_0=1$. In the special case with fixed cluster size, $\hat{P}_{n,j}=\hat{P}_{n,j}'$, $j\in\mathcal{S}$. Theorem 1, the continuous mapping theorem, and the strong law of large numbers imply that these state occupation probability estimators are uniformly consistent (outer almost surely) for the corresponding population-averaged state occupation probabilities $P_{0,j}(t)$ and $P_{0,j}'(t)$ over $[0,\tau]$.

The relationship between state occupation and transition probabilities along with Theorem 2 suggest rigorous inference procedures for the former% population-averaged state occupation probabilities
. It is not hard to see that, in light of Theorem 2, the state occupation probability estimators are asymptotically linear of the form
\[
\sqrt{n}[\hat{P}_{n,j}(t)-P_{0,j}(t)]=\frac{1}{\sqrt{n}}\sum_{i=1}^n\psi_{ij}(t)+o_p(1), \ \ \ \ j\in\mathcal{S}, \ \ t\in[0,\tau],
\]
where 
\begin{eqnarray*}
\psi_{ij}(t)&=&\sum_{h\in\mathcal{T}^c}\bigg(P_{0,h}(0)\gamma_{ihj}(0,t)+P_{0,hj}(0,t)\bigg[\frac{Y_{i\cdot,h}(0+)-EY_{1\cdot,h}(0+)}{\pi_0EM_1} \\
&&-P_{0,h}(0)\bigg\{\frac{M_i-EM_1}{EM_1}+\frac{M_i^{-1}Y_{i\cdot,\cdot}(0+)-\pi_0}{\pi_0}\bigg\}\bigg]\bigg),
\end{eqnarray*}
with $Y_{i\cdot,\cdot}(0+)=\sum_{h\in\mathcal{T}^c}Y_{i\cdot,h}(0+)$, and
\[
\sqrt{n}[\hat{P}_{n,j}'(t)-P_{0,j}'(t)]=\frac{1}{\sqrt{n}}\sum_{i=1}^n\psi_{ij}'(t)+o_p(1)\ \ \ \ j\in\mathcal{S}, \ \ t\in[0,\tau].
\]
with
\begin{eqnarray*}
\psi_{ij}'(t)&=&\sum_{h\in\mathcal{T}^c}(P_{0,h}'(0)\gamma_{ihj}'(0,t)+P_{0,hj}'(0,t)\pi_0^{-1}[M_i^{-1}Y_{i\cdot,h}(0+)-E\{M_1^{-1}Y_{1\cdot,h}(0+)\} \\
&&-P_{0,h}'(0)\{M_1^{-1}Y_{1\cdot,\cdot}(0+)-\pi_0\}]).
\end{eqnarray*}
The classes of functions $\{\psi_{ij}(t):t\in[0,\tau]\}$ and $\{\psi_{ij}'(t):t\in[0,\tau]\}$ are $P$-Donsker for any $j\in\mathcal{S}$. This is due to the fact that these classes consist of linear combinations of functions that belong to $P$-Donsker classes by Theorem 2, fixed functions, and random variables with bounded second moments. Therefore, $\sqrt{n}(\hat{P}_{n,j}-P_{0,j})$ and $\sqrt{n}(\hat{P}_{n,j}'-P_{0,j}')$ converge weakly to tight zero-mean Gaussian processes in $D[0,\tau]$, with covariance functions $E[\psi_{ij}(t_1)\psi_{ij}(t_2)]$ and $E[\psi_{ij}'(t_1)\psi_{ij}'(t_2)]$, for $t_1,t_2\in[0,\tau]$. As with the transition probabilities, the estimated influence functions can be used to consistently (in probability) estimate these covariance functions. The triangle inequality along with results presented in the proof of Theorem 2 in the Appendix A.3 can be easily used to justify the use of the estimated processes $n^{-1/2}\sum_{i=1}^n\hat{\psi}_{ij}(\cdot)\xi_i$ and $n^{-1/2}\sum_{i=1}^n\hat{\psi}_{ij}'(\cdot)\xi_i$ for approximating the asymptotic distributions of $\sqrt{n}(\hat{P}_{n,j}-P_{0,j})$ and $\sqrt{n}(\hat{P}_{n,j}'-P_{0,j}')$, respectively. The validity of the nonparametric cluster bootstrap for the state occupation probabilities follows from Theorem 2 and the bootstrap functional delta method \citep[Theorem 12.1 in][]{Kosorok08}. Therefore, the calculation of simultaneous confidence bands for the state occupation probabilities proceeds as for the case of transition probabilities described in subsection \ref{ss:TP}.

\subsection{Two-sample Kolmogorov--Smirnov-type tests}
\label{ss:test}
In many settings, the scientific interest is on comparing the transition probabilities for a particular transition $h\rightarrow j$ of the process $X(t)$ between two populations, say populations 1 and 2. Depending on what is the most relevant population-averaged quantity for the given context, the null hypothesis is either $H_0:P_{0,1hj}(s,\cdot)=P_{0,2hj}(s,\cdot)$ or $H_0:P_{0,1hj}'(s,\cdot)=P_{0,2hj}'(s,\cdot)$, for some $s\in[0,\tau)$. The corresponding two-sided alternative hypotheses are $H_1:P_{0,1hj}(s,\cdot)\neq P_{0,2hj}(s,\cdot)$ and $H_0:P_{0,1hj}'(s,\cdot)\neq P_{0,2hj}'(s,\cdot)$. Alternatively, one may be interested in comparing the state occupation probabilities for a particular state $j\in\mathcal{S}$ between the two populations. The null hypothesis in this case is either $H_0:P_{0,1j}=P_{0,2j}$ or $H_0:P_{0,1j}'=P_{0,2j}'$. Testing such hypotheses can be based on a sample of clusters of observations of the stochastic process of interest, which satisfies the requirements described in subsection \ref{ss:clust}. An example of such study is a multicenter randomized controlled trial where, for each cluster (e.g. center or clinic), some cluster members receive the intervention of interest and the remaining cluster members receive placebo. For such cases let $M_{1i}$ and $M_{2i}$ to be the number of observations from the $i$th cluster which belong to samples 1 and 2, respectively, with $M_{1i}+M_{2i}=M_i$, $i=1,\ldots,n$. We consider the situation where $M_{1i}\wedge M_{2i}>0$ almost surely. Here, we denote the counting and at-risk processes for the $m$th observation in the $p$th sample in the $i$th cluster as $N_{ipm,hj}(t)$, $h\neq j$, and $Y_{ipm,h}(t)$, $h\in\mathcal{T}^c$.

Based on this setup, define the estimators of the pointwise between-sample difference with respect to the population-averaged transition probabilities as
\[
\hat{\Delta}_{n,hj}(s,t)=\left[\hat{P}_{n,1hj}(s,t)-\hat{P}_{n,2hj}(s,t)\right], \ \ \ \ t\in[s,\tau],
\]
where $\hat{P}_{n,phj}$, $p=1,2$, is the estimator of $P_{0,phj}$ from the $p$th sample and
\[
\hat{\Delta}_{n,hj}'(s,t)=\left[\hat{P}_{n,1hj}'(s,t)-\hat{P}_{n,2hj}'(s,t)\right], \ \ \ \ t\in[s,\tau],
\]
where $\hat{P}_{n,phj}'$, $p=1,2$, is the estimator of $P_{0,phj}'$ from the $p$th sample, for some $s\in[0,\tau)$. Similarly, define the differences between the population-averaged state occupation probabilities as $
\hat{\Delta}_{n,j}(t)=[\hat{P}_{n,1j}(t)-\hat{P}_{n,2j}(t)]$, $t\in[0,\tau]$, where $\hat{P}_{n,pj}$, $p=1,2$, is the estimator of $P_{0,pj}$ from the $p$th sample, and $\hat{\Delta}_{n,j}'(t)=[\hat{P}_{n,1j}'(t)-\hat{P}_{n,2j}'(t)]$, $t\in[0,\tau]$, where $\hat{P}_{n,pj}'$, $p=1,2$, is the estimator of $P_{0,pj}'$ from the $p$th sample. The corresponding nonparametric cluster bootstrap realizations of the above differences are denoted by $\hat{\Delta}_{n,hj}^*(s,t)$, $\hat{\Delta}_{n,hj}^{\prime *}(s,t)$, $\hat{\Delta}_{n,j}^*(t)$, and $\hat{\Delta}_{n,j}^{\prime *}(t)$. It is important to note that these nonparametric cluster bootstrap realizations are generated by randomly sampling $n$ clusters with replacement, as described in subsections \ref{ss:TP} and \ref{ss:SOP}. Based on these differences we define the Kolmogorov--Smirnov-type test statistics $K_{n,hj}(s)=\sup_{t\in[s,\tau]}|\hat{W}_{hj}(t)\hat{\Delta}_{n,hj}(s,t)|$, for some appropriate weight function $\hat{W}_{hj}(t)$ and some $s\in[0,\tau)$, and $K_{n,j}=\sup_{t\in[0,\tau]}|\hat{W}_{j}(t)\hat{\Delta}_{n,j}(t)|$. The corresponding tests for $\hat{\Delta}_{n,hj}'(s,t)$ and $\hat{\Delta}_{n,j}'(t)$, denoted by $K_{n,hj}'(s)$ and $K_{n,j}'$, are defined in the same manner. The weights $\hat{W}_{hj}(t)$, $\hat{W}_{hj}'(t)$, $\hat{W}_{j}(t)$ and $\hat{W}_{j}'(t)$ are assumed to be uniformly consistent (in probability) for the non-negative and uniformly bounded fixed functions $W_{hj}(t)$, $W_{hj}'(t)$, $W_{j}(t)$ and $W_{j}'(t)$. The importance of the weight functions lies on the fact that they can restrict the comparison interval to a set of times where both samples under comparison have non-zero observations at risk for the transition of interest. An example of such a weight function is $\hat{W}_{hj}(t)=I[\prod_{l\in L(h,j)}\bar{Y}_{1,l}(t)\bar{Y}_{2,l}(t)>0]$, where $L(h,j)=\{d\in\mathcal{S}:d$ is a transient state that can be visited during the transition $h\rightarrow j\}$ and $\bar{Y}_{p,h}(t)=n_p^{-1}\sum_{i=1}^{n_p}Y_{pi\cdot,h}(t)$, for the sample $p=1,2$, with $M_{pi}$ denoting the number of observations in the $i$th cluster of the $p$th sample, and $Y_{pi\cdot,h}(t)$ denoting the sum of the at-risk process for state $h$ in the $i$th cluster and the $p$th sample, $p=1,2$. Similarly, this type of weight can be defined for the state occupation probabilities as $\hat{W}_{j}(t)=I[\prod_{l\in\cup_{h\in\mathcal{T}^c}L(h,j)}\bar{Y}_{1,l}(t)\bar{Y}_{2,l}(t)>0]$. The weights $\hat{W}_{hj}'(t)$ and $\hat{W}_{j}'(t)$ are similarly defined. The weight functions can also be used to assign less weight to observation times with a smaller number of observations at risk where the estimated difference tends to be unstable. An example of such weight functions is
\[
\hat{W}_{hj}(t)=\frac{\prod_{l\in L(h,j)}\bar{Y}_{1,l}(t)\bar{Y}_{2,l}(t)}{\sum_{l\in L(h,j)}\left[\bar{Y}_{1,l}(t)+\bar{Y}_{2,l}(t)\right]} \ \ \textrm
{and} \ \ \hat{W}_{j}(t)=\frac{\prod_{l\in\cup_{h\in\mathcal{T}^c}L(h,j)}\bar{Y}_{1,l}(t)\bar{Y}_{2,l}(t)}{\sum_{l\in \cup_{h\in\mathcal{T}^c}L(h,j)}\left[\bar{Y}_{1,l}(t)+\bar{Y}_{2,l}(t)\right]}.
\]
The corresponding weights $\hat{W}_{hj}'(t)$ and $\hat{W}_{j}'(t)$ can be similarly defined by replacing $\bar{Y}_{p,h}(t)$ with $n_p^{-1}\sum_{i=1}^{n_p}M_{pi}^{-1}Y_{pi\cdot,h}(t)$, for the sample $p=1,2$. In practice we suggest the use of this latter type of weight functions. The asymptotic distribution of the Kolmogorov--Smirnov-type tests under the null hypothesis is not normal and has a complicated form as it will be shown later. However, we will show that conducting hypothesis testing with these tests can be based either on nonparametric cluster bootstrap or estimated processes similar to those defined for the construction of simultaneous confidence bands in subsection \ref{ss:TP}. For the latter case, consider the influence functions for the sample-specific estimators $\hat{P}_{n,phj}(s,t)$ and $\hat{P}_{n,pj}(t)$, $p=1,2$, which are denoted by $\gamma_{p,ihj}(s,t)$ and $\psi_{p,ij}(t)$, $p=1,2$, and are defined as in subsections \ref{ss:TP} and \ref{ss:SOP}. Now, define the estimated processes $\hat{C}_{n,hj}(s,t)=\hat{W}_{hj}(t)n^{-1/2}\sum_{i=1}^{n}[\hat{\gamma}_{1,ihj}(s,t)-\hat{\gamma}_{2,ihj}(s,t)]\xi_{i}$, $t\in[s,\tau]$, for some $s\in[0,\tau)$, where $\xi_{i}$, are independent standard normal variables and the influence functions are estimated as described in subsection \ref{ss:TP}, and $\hat{C}_{n,j}(t)=\hat{W}_{j}(t)n^{-1/2}\sum_{i=1}^{n}[\hat{\psi}_{1,ij}(t)-\hat{\psi}_{2,ij}(t)]\xi_{i}$, $t\in[0,\tau]$. Similarly, one can define the estimated processes $\hat{C}_{n,hj}(s,t)$ and $\hat{C}_{n,j}'(t)$ which correspond to the tests for $\hat{\Delta}_{n,hj}'(s,t)$ and $\hat{\Delta}_{n,j}'(t)$. Theorem 3 provides the basis for conducting two-sample testing% based on the Kolmogorov--Smirnov-type tests
.

\begin{theorem}
Suppose that conditions C1, C2, C3', C4', C5 and C6' in Appendix A.1 hold. Then, under the null hypothesis and for any $h\in\mathcal{T}^c$, $j\in\mathcal{S}$, and $s\in[0,\tau)$,
\begin{itemize}
\item[(i)] $\sqrt{n}\hat{W}_{hj}(\cdot)\hat{\Delta}_{n,hj}(s,\cdot)\leadsto\mathbb{Z}_{hj}(s,\cdot)$ in $D[s,\tau]$, where $\mathbb{Z}_{hj}(s,\cdot)$ is a tight zero-mean Gaussian process with covariance function
\[
W_{hj}(t_1)W_{hj}(t_2)E\{[\gamma_{1,1hj}(s,t_1)-\gamma_{2,1hj}(s,t_1)][\gamma_{1,1hj}(s,t_2)-\gamma_{2,1hj}(s,t_2)]\},
\]
for $t_1,t_2\in[s,\tau]$. Moreover, $\hat{C}_{n,hj}(s,\cdot)\overset{p}{\underset{\xi}\leadsto}\mathbb{Z}_{hj}(s,\cdot)$ in $D[s,\tau]$, and
\[
\sqrt{n}\hat{W}_{hj}(\cdot)[\hat{\Delta}_{n,hj}^*(s,\cdot)-\hat{\Delta}_{n,hj}(s,\cdot)]\overset{p}{\underset{U}\leadsto}\mathbb{Z}_{hj}(s,\cdot) \ \ \textrm{in} \ \ D[s,\tau].%, \ \ \textrm{as $n\rightarrow\infty$}.
\]
\item[(ii)] $\sqrt{n}\hat{W}_{hj}\hat{\Delta}_{n,j}\leadsto\mathbb{Z}_{j}$ in $D[0,\tau]$, where $\mathbb{Z}_{j}$ is a tight zero-mean Gaussian process with covariance function
\[
W_{j}(t_1)W_{j}(t_2)E\{[\psi_{1,1j}(s,t_1)-\psi_{2,1j}(s,t_1)][\psi_{1,1j}(s,t_2)-\psi_{2,1j}(s,t_2)]\},
\]
for $t_1,t_2\in[s,\tau]$. Moreover, $\hat{C}_{n,j}\overset{p}{\underset{\xi}\leadsto}\mathbb{Z}_{j}$ in $D[0,\tau]$, and
\[
\sqrt{n}\hat{W}_{j}(\hat{\Delta}_{n,j}^*-\hat{\Delta}_{n,j})\overset{p}{\underset{U}\leadsto}\mathbb{Z}_{j} \ \ \textrm{in} \ \ D[0,\tau].%, \ \ \textrm{as $n\rightarrow\infty$}.
\]
\end{itemize}
\end{theorem}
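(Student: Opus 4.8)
The plan is to reduce the two-sample problem to the one-sample weak convergence already established in Theorem 2, exploiting the fact that under $H_0$ the population-level difference vanishes. Writing, under the null, $\hat{\Delta}_{n,hj}(s,t)=[\hat{P}_{n,1hj}(s,t)-P_{0,1hj}(s,t)]-[\hat{P}_{n,2hj}(s,t)-P_{0,2hj}(s,t)]$, I would apply the asymptotic linear representation of Theorem 2(i) to each sample to obtain
\[
\sqrt{n}\hat{\Delta}_{n,hj}(s,t)=\frac{1}{\sqrt{n}}\sum_{i=1}^n[\gamma_{1,ihj}(s,t)-\gamma_{2,ihj}(s,t)]+o_p(1),\ \ t\in[s,\tau].
\]
The crucial structural point is that the two samples are drawn from the \emph{same} $n$ clusters (since $M_{1i}\wedge M_{2i}>0$), so the summands are i.i.d.\ across clusters and the within-cluster dependence between the two samples is automatically carried by the pair $(\gamma_{1,ihj},\gamma_{2,ihj})$ sharing the cluster index $i$. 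Because each of the classes $\{\gamma_{1,ihj}(s,t)\}$ and $\{\gamma_{2,ihj}(s,t)\}$ is $P$-Donsker by Theorem 2(i), their difference is $P$-Donsker as well, so the empirical process central limit theorem yields weak convergence of the centered sum to a tight mean-zero Gaussian process. The weight is handled by a uniform Slutsky argument: since $\hat{W}_{hj}\to W_{hj}$ uniformly in probability to a bounded function, multiplication by $\hat{W}_{hj}$ preserves the weak limit and produces $\mathbb{Z}_{hj}(s,\cdot)$, whose covariance at $(t_1,t_2)$ is $W_{hj}(t_1)W_{hj}(t_2)$ times the covariance of the single-cluster summand, matching the stated expression.

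For the Gaussian multiplier process $\hat{C}_{n,hj}$, I would invoke the conditional multiplier central limit theorem for Donsker classes, which gives $\hat{C}_{n,hj}(s,\cdot)\overset{p}{\underset{\xi}\leadsto}\mathbb{Z}_{hj}(s,\cdot)$ provided two ingredients hold: the relevant difference class is $P$-Donsker (established above) and the \emph{estimated} influence functions $\hat{\gamma}_{p,ihj}$ can be substituted for the true $\gamma_{p,ihj}$ without altering the limit. Under the stated conditions C1, C2, C3$'$, C4$'$, C5 and C6$'$, the latter substitution is controlled by the uniform consistency of the plug-in components guaranteed by Theorems 1 and 2, so that $n^{-1}\sum_i(\hat{\gamma}_{1,ihj}-\hat{\gamma}_{2,ihj}-\gamma_{1,ihj}+\gamma_{2,ihj})^2\to 0$ uniformly in probability. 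Combining the multiplier CLT with the uniform convergence of $\hat{W}_{hj}$ then gives the stated conditional weak convergence of $\hat{C}_{n,hj}$.

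For the nonparametric cluster bootstrap, the key observation is that resampling whole clusters applies the \emph{same} multinomial weights $U_{ni}$ to the sample-1 and sample-2 members of cluster $i$, thereby preserving the joint within-cluster structure. I would therefore establish a joint (bivariate) version of the bootstrap result of Theorem 2(ii) for the pair $(\hat{P}_{n,1hj}^*,\hat{P}_{n,2hj}^*)$ and take the difference, so that $\sqrt{n}\hat{W}_{hj}(\cdot)[\hat{\Delta}_{n,hj}^*(s,\cdot)-\hat{\Delta}_{n,hj}(s,\cdot)]\overset{p}{\underset{U}\leadsto}\mathbb{Z}_{hj}(s,\cdot)$, using the conditional multinomial bootstrap CLT for Donsker classes and again the uniform convergence of the weights. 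Part (ii), concerning the state occupation probabilities, follows by the identical chain of arguments with the influence functions $\psi_{p,ij}$ in place of $\gamma_{p,ihj}$; these classes were shown to be $P$-Donsker in subsection \ref{ss:SOP}, so no new machinery is required.

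The main obstacle I anticipate is not the marginal weak convergence---which is essentially a corollary of Theorem 2 once one notes that differences of Donsker classes are Donsker---but rather the two conditional statements. Verifying the conditional multiplier and cluster-bootstrap CLTs requires carefully justifying the replacement of the true influence functions by their plug-in estimates uniformly over $[s,\tau]$, and, for the bootstrap, propagating the shared-cluster resampling through the product-integral (Aalen--Johansen) map via the bootstrap functional delta method so that the dependence between the two samples is correctly transmitted to the limit. Establishing that both resampling schemes reproduce exactly the same covariance as the original statistic, while the two samples remain dependent through common clusters, is the delicate part.
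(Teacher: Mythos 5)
Your proposal follows essentially the same route as the paper's Appendix A.4 proof: under $H_0$ you reduce $\sqrt{n}\hat{\Delta}_{n,hj}(s,\cdot)$ to the difference of the sample-specific influence-function sums from Theorem 2, invoke the $P$-Donsker property of the difference class together with a uniform Slutsky argument for the weights, and then obtain the multiplier and cluster-bootstrap statements via the conditional multiplier CLT (with plug-in influence functions substituted as in the proof of Theorem 2(ii)) and the bootstrap continuous mapping/functional delta method, exactly as the paper does. The one step where the paper is slightly more explicit is the replacement of $W_{hj}$ by $\hat{W}_{hj}$ in the bootstrap claim: it first derives \emph{unconditional} weak convergence of $\sqrt{n}[\hat{P}_{n,phj}^*(s,\cdot)-\hat{P}_{n,phj}(s,\cdot)]$, $p=1,2$, via the unconditional multiplier CLT and a double application of the functional delta method, so that $\sup_t\big|\sqrt{n}[\hat{W}_{hj}(t)-W_{hj}(t)][\hat{\Delta}_{n,hj}^*(s,t)-\hat{\Delta}_{n,hj}(s,t)]\big|=o_p(1)$ unconditionally---which is the precise content of your appeal to ``again the uniform convergence of the weights.''
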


The proof of Theorem 3 can be found in Appendix A.4. A relaxation of condition C6' is also presented in Appendix A.5. Using the same arguments given in this proof, it can be shown that a similar version of Theorem 3 holds for the differences $\hat{\Delta}_{h,hj}'(s,\cdot)$ and $\hat{\Delta}_{h,j}'$. Based on Theorem 3 and the continuous mapping theorem it follows that, under the null hypothesis, $\sqrt{n}K_{n,hj}(s)\overset{d}\rightarrow\sup_{t\in[s,\tau]}|\mathbb{Z}_{hj}(s,t)|$, for any $s\in[0,\tau)$, and $\sqrt{n}K_{n,j}\overset{d}\rightarrow\sup_{t\in[0,\tau]}|\mathbb{Z}_{j}(t)|$. These asymptotic null distributions are complicated to use in practice for the calculation of $p$-values. However, by Theorem 3 and the continuous mapping theorem, one can simulate realizations from these null distributions by simulating a sufficiently large number of sets $\{\xi_{i}\}_{i=1}^{n}$ of independent standard normal variables and then calculating samples from these null distributions as $\sup_{t\in[s,\tau]}|\hat{C}_{n,hj}(s,t)|$ and $\sup_{t\in[0,\tau]}|\hat{C}_{n,j}(t)|$. Alternatively, realizations from these asymptotic null distributions can be generated by obtaining a sufficiently large number of nonparametric cluster bootstrap realizations $\hat{\Delta}_{n,hj}^*(s,t)$, $t\in[s,\tau]$, and $\hat{\Delta}_{n,j}^*(t)$, $t\in[0,\tau]$. Then, simulation realizations from the asymptotic null distributions can be calculated as $\sqrt{n}\sup_{t\in[s,\tau]}|\hat{W}_{hj}(t)[\hat{\Delta}_{n,hj}^*(s,t)-\hat{\Delta}_{n,hj}(s,t)]|$ and $\sqrt{n}\sup_{t\in[0,\tau]}|\hat{W}_{j}(t)[\hat{\Delta}_{n,j}^*(t)-\hat{\Delta}_{n,j}(t)]|$. The $p$-value can then be estimated as the proportion of simulation realizations from the corresponding asymptotic null distribution which are greater than or equal to the actual value of the test statistic based on the observed data. The proposed Kolmogorov--Smirnov-type tests are consistent. % for any fixed alternative hypothesis.
This follows from Theorem 3, the uniform consistency of the proposed transition probability and state occupation probability estimators, the continuity of these tests in the differences $\hat{\Delta}_{n,hj}(s,t)$, $\hat{\Delta}_{n,j}(t)$, $\hat{\Delta}_{n,hj}'(s,t)$, and $\hat{\Delta}_{n,j}'(t)$, and Lemma 14.15 in \citet{Van00}.

\subsection{Non-Markov processes}
\label{ss:nonmark}
When the stochastic process $X(t)$ is non-Markov, the transition probabilities and transition intensities depend on the prior history $\mathcal{F}_{t^-}$.  In this case, the population-averaged transition intensities defined in subsection \ref{ss:clust} are the \textit{partly condition transition intensities} \citep{Pepe93,Datta01,Glidden02}, which are not conditional on the prior history $\mathcal{F}_{t^-}$. Such marginal intensities have been argued to be meaningful quantities even for non-Markov processes, because they describe the marginal (i.e. unconditional on the prior history) behavior of the process %\citep{Pepe93,Datta01,Glidden02}
\citep{Datta01,Glidden02}. With independent observations from a non-Markov process, \citet{Datta01} showed that the Nelson--Aalen estimator of the cumulative transition intensities and the Aalen--Johansen estimator of the state occupation probabilities are consistent for the corresponding marginal quantities. Using the same arguments to those presented by \citet{Datta01} it can be shown that, with clustered observations from a non-Markov process, the proposed estimators of the (marginal) population-averaged cumulative transition intensities and state occupation probabilities are consistent. Similarly, as in the case with independent observations \citep{Titman15}, the proposed estimators $\hat{\mathbf{P}}_{n}(0,t)$ and $\hat{\mathbf{P}}_{n}'(0,t)$ are consistent for the population-averaged $\mathbf{P}_{0}(0,t)$ and $\mathbf{P}_{0}'(0,t)$, even for non-Markov processes. However, for $s>0$, the proposed estimators $\hat{\mathbf{P}}_{n}(s,t)$ and $\hat{\mathbf{P}}_{n}'(s,t)$ are not consistent, in general, for non-Markov processes, as in the case with independent observations \citep{Titman15}. In such cases, following \citet{Putter18}, we propose the landmark version of the working-independence and weighted by cluster size working-independence Aalen--Johansen estimators. This estimator can be obtained using the modified counting and at-risk processes $\tilde{N}_{im,hj}(t)=N_{im,hj}(t)I(X_{im}(s)=h)$ and $\tilde{Y}_{im,h}(s)=Y_{im,h}(t)I(X_{im}(s)=h)$, instead of the original $N_{im,hj}(t)$ and $Y_{im,h}(t)$, in $\hat{\mathbf{P}}_{n}(s,t)$ and $\hat{\mathbf{P}}_{n}'(s,t)$. The landmark versions of the proposed transition probability estimators can be shown to be consistent using the same arguments to those used in \citet{Putter18}.

Inference for the proposed estimators of the marginal population-averaged quantities can be performed as indicated in Theorems 2 and 3, with the exception that the influence functions for the landmark versions of $\hat{\mathbf{P}}_{n}(s,t)$ and $\hat{\mathbf{P}}_{n}'(s,t)$ involve the modified processes $\tilde{N}_{im,hj}(t)$ and $\tilde{Y}_{im,j}(t)$. Note that, the influence functions for the estimators $\hat{P}_{n,hj}(0,t)$, $\hat{P}_{n,hj}'(0,t)$, $\hat{P}_{n,j}(t)$, and  $\hat{P}_{n,j}'(t)$ involve the quantities $P_{0,qj}(u,t)$ and $P_{0,qj}'(u,t)$, for $u>0$. With non-Markov processes, these quantities are defined as the $(q,j)$ element of the matrices $\prodi_{(u,t]}[\mathbf{I}_k+d\mathbf{A}_0(s)]$ and $\prodi_{(u,t]}[\mathbf{I}_k+d\mathbf{A}_0'(s)]$, respectively. The latter matrices are not necessarily equal to the true (conditional on the prior history) transition probability matrices under a non-Markov process. Nevertheless, the true influence functions of the estimators depend on these matrices regardless of the Markov assumption. This is because, given the consistency of the estimators, the derivation of the influence functions in the proof of Theorem 2 (Appendix A.3) does not utilize the Markov assumption. The same phenomenon is observed for the independent observations setting \citep{Glidden02}. Since these matrices are continuous in $\mathbf{A}_0(s)$ and $\mathbf{A}_0'(s)$ \citep{Andersen12}, they can be consistently estimated by $\prodi_{(u,t]}[\mathbf{I}_k+d\hat{\mathbf{A}}_n(s)]$ and $\prodi_{(u,t]}[\mathbf{I}_k+d\hat{\mathbf{A}}_n'(s)]$, in order to estimate the corresponding influence functions.

\section{Simulation studies}
\label{s:sims}

To evaluate the finite sample properties of the proposed methods we conducted a series of simulation experiments under a non-Markov illness-death model with state space $\mathcal{S}=\{1,2,3\}$ and absorbing state space $\mathcal{T}=\{3\}$, in a study with informative cluster size. The goal of these simulation studies was to conduct inference about the population-averaged state occupation probability $P_{0,2}'(t)$. Note that, for the illness-death model where state 1 (healthy) is the unique inital state, $P_{0,2}'(t)=P_{0,12}'(0,t)$. We considered scenarios with $n=20,40,80$ clusters. These sample sizes are considered small or relatively small. The cluster sizes $M_i$, $i=1,\ldots,n$, were simulated  from either of the discrete uniform distributions $\mathcal{U}(5,15)$ and $\mathcal{U}(10,30)$, producing scenarios with 5 to 15 and 10 to 30 observations per cluster, respectively. To simulate non-Markov illness-death processes which are correlated within clusters, we simulated cluster-specific frailties $v_i$, $i=1,\ldots,n$, from the Gamma distribution with shape and scale parameters equal to 1. Conditionally on the frailty values $v_i$ and the cluster sizes $m_i$, we simulated the non-Markov illness-death processes based on the cumulative transition intensities $A_{0,12}'(t;v_i)=\{0.25+0.25\times I[m_i\leq E(M_1)]\}v_it$, $A_{0,23}'(t;v_i)=0.5v_it$, and $A_{0,13}'(t;v_i)=0.25v_it$, $i=1,\ldots,n$. Note that the dependence of $A_{0,12}'(t;v_i)$ on cluster size produced data with informative cluster size. Additionally, independent right censoring times were simulated from the uniform distribution $U(0,3)$. This settings led to 57.5\% right-censored observations, 24.4\% observations at the illness state (2; 45.9\% of those arrived later at the death state), and 18.1\% at the death state (3) without a prior visit to the illness state. Under this setup, we also simulated a two-arm multicenter randomized controlled trial under $H_0:P_{0,12}'=P_{0,22}'$, where $P_{0,p2}'$ denotes the state occupation probability for the $p$th arm, $p=1,2$, with a 1:1 arm allocation ratio within clusters. Data under $H_0:P_{0,12}'\neq P_{0,22}'$ were simulated similarly with the exception that we assumed the intensity $A_{0,p12}'(t;v_i)=\{0.25+0.5\times I(p=2)+0.25\times I[m_i\leq E(M_1)]\}v_it$, $p=1,2$, depending on the treatment arm $p$. Data in all scenarios were analyzed using the proposed methods. Simultaneous confidence bands and $p$-values from the Kolmogorov--Smirnov-type tests were based on 1000 simulated sets $\{\xi_i\}_{i=1}^n$ of standard normal variates or 1000 nonparametric cluster bootstrap realizations. Moreover, as described in subsection \ref{ss:TP}, the range of the confidence bands was restricted for each data set to the 10th and 90th percentile of the distribution of transition times from state 1 to state 2. We also present simulation results for the one-sample case under the working-independence Aalen-Johansen estimator using the usual Greenwood standard error estimates and a wild bootstrap approach for confidence bands that ignores the within-cluster dependence.

Pointwise simulation results under the one-sample setup are presented in Tables \ref{t:point1} and \ref{t:point2}. Ignoring the within-cluster dependence was associated with underestimated standard errors and poor coverage probabilities of the 95\% confidence intervals. Also, the working-independence Aalen-Johansen estimator of $P_{0,2}'(t)$ exhibited some bias as a result of the informative cluster-size. The proposed estimator of $P_{0,2}'(t)$ was virtually unbiased, the standard error estimates based on the influence functions and the nonparametric cluster bootstrap were both close to the Monte Carlo standard deviation of the estimates, and the corresponding 95\% pointwise confidence intervals were close to the nominal level, except for the case with a very small number of clusters ($n$=20) and only 5-15 patients per cluster.

\begin{table}
\caption{Simulation results for the analysis of $P_{0,2}'(\tau_{0.4})$, where $\tau_{0.4}$ is the 40th percentile of the follow-up time, based on the standard approach which ignores the within-cluster dependence (na\"ive) and the proposed method with the estimated process $\hat{C}_{n,j}$ (IF) or the nonparametric cluster bootstrap (CB). ($n$: number of clusters; $F_M$: discrete uniform distribution of the cluster size; $^*$: $\times 10^2$; MCSD: Monte Carlo standard deviation of the estimates; ASE: average estimated standard error; CP: coverage probability).}
\label{t:point1}
\begin{center}
\begin{tabular}{lllcccc}
\hline
$n$ & $F_M$ & Method & Bias$^*$ & MCSD$^*$ & ASE$^*$ & CP \\ \hline
20&$\mathcal{U}[5,15]$&Na\"ive&-1.022&3.226&2.623&0.859 \\
&&Proposed (IF)&-0.063&3.517&3.311&0.926 \\
&&Proposed (CB)&-0.063&3.517&3.316&0.923 \\[1.5ex]
&$\mathcal{U}[10,30]$&Na\"ive&-0.928&2.558&1.855&0.816 \\
&&Proposed (IF)&0.077&2.787&2.702&0.940 \\
&&Proposed (CB)&0.077&2.787&2.698&0.939 \\[1.5ex]
40&$\mathcal{U}[5,15]$&Na\"ive&-0.939&2.199&1.863&0.866 \\
&&Proposed (IF)&0.080&2.403&2.411&0.948 \\
&&Proposed (CB)&0.080&2.403&2.407&0.947 \\[1.5ex]
&$\mathcal{U}[10,30]$&Na\"ive&-1.003&1.808&1.310&0.779 \\
&&Proposed (IF)&-0.012&1.940&1.941&0.946 \\
&&Proposed (CB)&-0.012&1.940&1.940&0.945 \\[1.5ex]
80&$\mathcal{U}[5,15]$&Na\"ive&-1.083&1.551&1.312&0.820 \\
&&Proposed (IF)&-0.055&1.699&1.715&0.940 \\
&&Proposed (CB)&-0.055&1.699&1.711&0.940 \\[1.5ex]
&$\mathcal{U}[10,30]$&Na\"ive&-0.962&1.286&0.928&0.732 \\
&&Proposed (IF)&0.025&1.399&1.382&0.944 \\
&&Proposed (CB)&0.025&1.399&1.382&0.946 \\
\hline
\end{tabular}
\end{center}
\end{table}

\begin{table}
\caption{Simulation results for the analysis of $P_{0,2}'(\tau_{0.6})$, where $\tau_{0.6}$ is the 60th percentile of the follow-up time, based on the standard approach which ignores the within-cluster dependence (na\"ive) and the proposed method with the estimated process $\hat{C}_{n,j}$ (IF) or the nonparametric cluster bootstrap (CB). ($n$: number of clusters; $F_M$: discrete uniform distribution of the cluster size; $^*$: $\times 10^2$; MCSD: Monte Carlo standard deviation of the estimates; ASE: average estimated standard error; CP: coverage probability).}
\label{t:point2}
\begin{center}
\begin{tabular}{lllcccc}
\hline
$n$ & $F_M$ & Method & Bias$^*$ & MCSD$^*$ & ASE$^*$ & CP \\ \hline
20&$\mathcal{U}[5,15]$&Na\"ive&-0.939&3.656&3.033&0.888 \\
&&Proposed (IF)&0.077&3.963&3.651&0.924 \\
&&Proposed (CB)&0.077&3.963&3.663&0.920 \\[1.5ex]
&$\mathcal{U}[10,30]$&Na\"ive&-0.940&2.740&2.140&0.854 \\
&&Proposed (IF)&0.078&2.978&2.899&0.935 \\
&&Proposed (CB)&0.078&2.978&2.899&0.933 \\[1.5ex]
40&$\mathcal{U}[5,15]$&Na\"ive&-1.060&2.364&2.140&0.896 \\
&&Proposed (IF)&0.027&2.592&2.635&0.953 \\
&&Proposed (CB)&0.027&2.592&2.636&0.953 \\[1.5ex]
&$\mathcal{U}[10,30]$&Na\"ive&-1.020&1.943&1.509&0.818 \\
&&Proposed (IF)&-0.011&2.100&2.075&0.937 \\
&&Proposed (CB)&-0.011&2.100&2.075&0.936 \\[1.5ex]
80&$\mathcal{U}[5,15]$&Na\"ive&-1.152&1.738&1.510&0.845 \\
&&Proposed (IF)&-0.084&1.894&1.885&0.949 \\
&&Proposed (CB)&-0.084&1.894&1.885&0.948 \\[1.5ex]
&$\mathcal{U}[10,30]$&Na\"ive&-0.972&1.433&1.070&0.775 \\
&&Proposed (IF)&0.045&1.543&1.487&0.942 \\
&&Proposed (CB)&0.045&1.543&1.488&0.945 \\
\hline
\end{tabular}
\end{center}
\end{table}

Simulation results regarding the coverage probabilities of the 95\% simultaneous confidence bands are presented in Table \ref{t:bands}. The wild bootstrap approach for confidence band calculation, which ignores the within-cluster dependence, exhibited poor coverage rates. On the contrary, the coverage probabilities of the proposed approaches were close to the nominal level, except for the case with 20 clusters (smallest cluster size) and 5-15 observations per cluster, where the coverage rate was somewhat lower. Finally, simulation results about the empirical rejection rates of the proposed Kolmogorov--Smirnov-type tests are presented in Table \ref{t:tests}. Under $H_0$, the type I error rate of the tests was close to the nominal level $\alpha=0.05$ in all cases. Under $H_1$, the empirical power was increasing with sample size and this provides numerical evidence for the consistency of the proposed tests.

\begin{table}
\caption{Simulation results regarding the coverage probabilities of the 95\% simultaneous confidence bands based on the standard method that ignores the within-cluster dependence (na\"ive) and the proposed method with the estimated process $\hat{C}_{n,j}$ (IF) or the nonparametric cluster bootstrap (CB). ($n$: number of clusters; $F_M$: discrete uniform distribution of the cluster size).}
\label{t:bands}
\begin{center}
\begin{tabular}{llccc}
\hline
$n$ & $F_M$ &  Na\"ive & \multicolumn{2}{c}{Proposed} \\
&&& IF & CB \\ \hline
20&$\mathcal{U}[5,15]$&0.826&0.917&0.911 \\
&$\mathcal{U}[10,30]$&0.771&0.946&0.938 \\
40&$\mathcal{U}[5,15]$&0.849&0.945&0.940 \\
&$\mathcal{U}[10,30]$&0.750&0.945&0.946 \\
80&$\mathcal{U}[5,15]$&0.788&0.940&0.942 \\
&$\mathcal{U}[10,30]$&0.689&0.945&0.940 \\
\hline
\end{tabular}
\end{center}
\end{table}

\begin{table}
\caption{Simulation results regarding the empirical type I error ($H_0$) and the empirical power ($H_1$) of the proposed Kolmogorov--Smirnov-type tests at the $\alpha=0.05$ level. Significance levels were calculated based on either the estimated processes $\hat{C}_{n,j}$ (IF) or the nonparametric cluster bootstrap (CB). ($n$: number of clusters; $F_M$: distribution of the cluster size).}
\label{t:tests}
\begin{center}
\begin{tabular}{llcccc}
\hline
&& \multicolumn{2}{c}{$H_0$} &  \multicolumn{2}{c}{$H_1$} \\
$n$ & $F_M$ & IF & CB & IF & CB \\ \hline
20&$\mathcal{U}[5,15]$&0.049&0.050&0.331&0.337 \\
&$\mathcal{U}[10,30]$&0.044&0.040&0.598&0.601 \\
40&$\mathcal{U}[5,15]$&0.037&0.039&0.612&0.603 \\
&$\mathcal{U}[10,30]$&0.044&0.046&0.874&0.873 \\
80&$\mathcal{U}[5,15]$&0.049&0.047&0.870&0.864 \\
&$\mathcal{U}[10,30]$&0.059&0.055&0.991&0.990 \\
\hline
\end{tabular}
\end{center}
\end{table}

\section{Analysis of the multicenter EORTC trial 10854}
\label{s:app}
The proposed methods were applied to analyze data from the EORTC trial 10854 \citep{EORTC}. This was a multicenter randomized controlled trial which was conducted to compare the effectiveness of the combination of surgery plus polychemotherapy versus surgery alone as treatment options for early breast cancer. In total, 2793 early breast cancer patients from 15 hospitals (clusters) were recruited in this trial. Of them, 1398 (50.1\%) were randomly assigned to the group receiving the combination therapy approach. In this multicenter trial, cluster sizes ranged from 6 to 902 patients. After surgery, 385 (13.8\%) patients experienced locoregional relapse and 810 (29.0\%) died throughout the follow-up period. This patient event history can be described by an illness-death model with the states ``cancer-free'' (state 1), ``cancer'' (state 2), and ``death'' (state 3). In this analysis we focus on the between-arm comparison of the population-averaged state occupation probabilities of cancer $P_{0,12}(t)$ (for the population undergoing surgery only) and $P_{0,22}(t)$ (for the population receiving the combination of surgery plus polychemotherapy). These population-averaged probabilities correspond to the population of all hospital patients. In this application we consider these estimands more relevant compared to the population-averaged state occupation probabilities $P_{0,12}'(t)$ and $P_{0,22}'(t)$ of typical hospital patients. The overall state occupation probability estimates for the three states, along with the associated 95\% simultaneous confidence bands are presented in Figure \ref{f:sop}. These confidence bands were calculated based on 1000 nonparametric cluster bootstrap realizations. Figure \ref{f:sop} provides significant information about the natural history of early breast cancer patients undergoing surgery. The arm-specific state occupation probabilities of cancer are presented in Figure \ref{f:test}. To compare these population-averaged probabilities between arms we used the proposed Kolmogorov--Smirnov-type test based on 1000 nonparametric cluster bootstrap realizations. This test was not statistically significant ($p$-value=0.097) at the level $\alpha=0.05$ and, therefore, we cannot reject the null hypothesis that the population-averaged probabilities of cancer do not differ between arms. %Based on Figure \ref{f:test}, there was a trend for a lower probability of cancer in the group that received the combination surgery plus polychemotherapy.

\begin{figure}
\begin{center}
\centerline{\includegraphics[width=17cm]{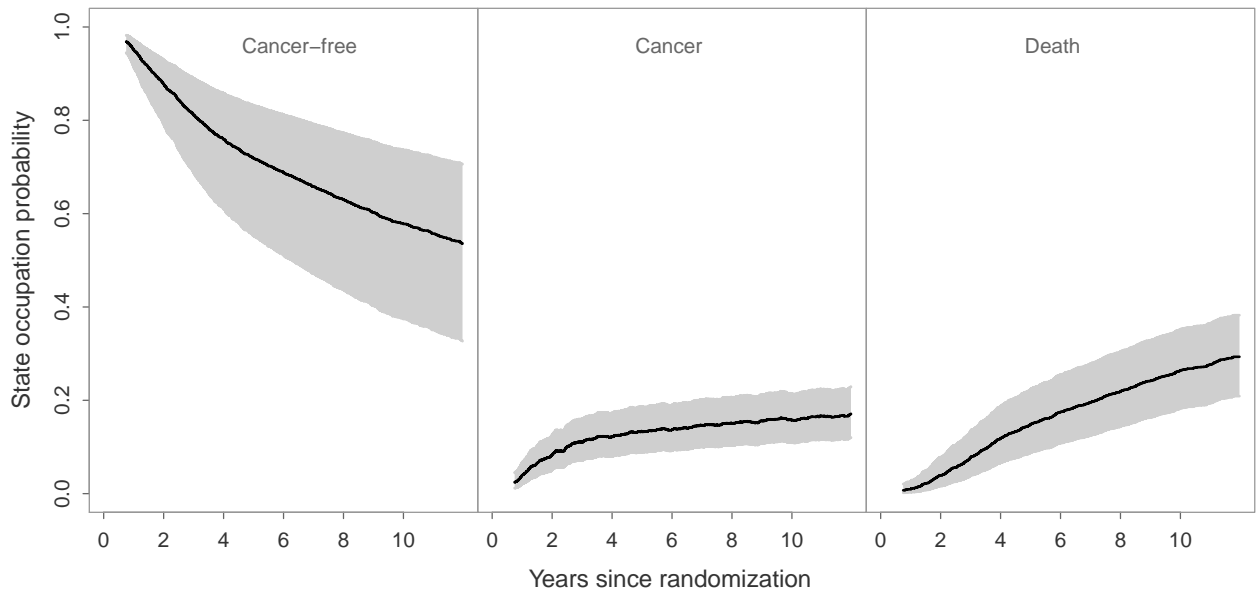}}
\end{center}
\caption{Overall population-averaged state occupation probabilities of the three states (black lines) in the multicenter EORTC trial 10854, along with the 95\% simultaneous confidence bands (grey areas).
\label{f:sop}}
\end{figure}

\begin{figure}
\begin{center}
\centerline{\includegraphics[width=12cm]{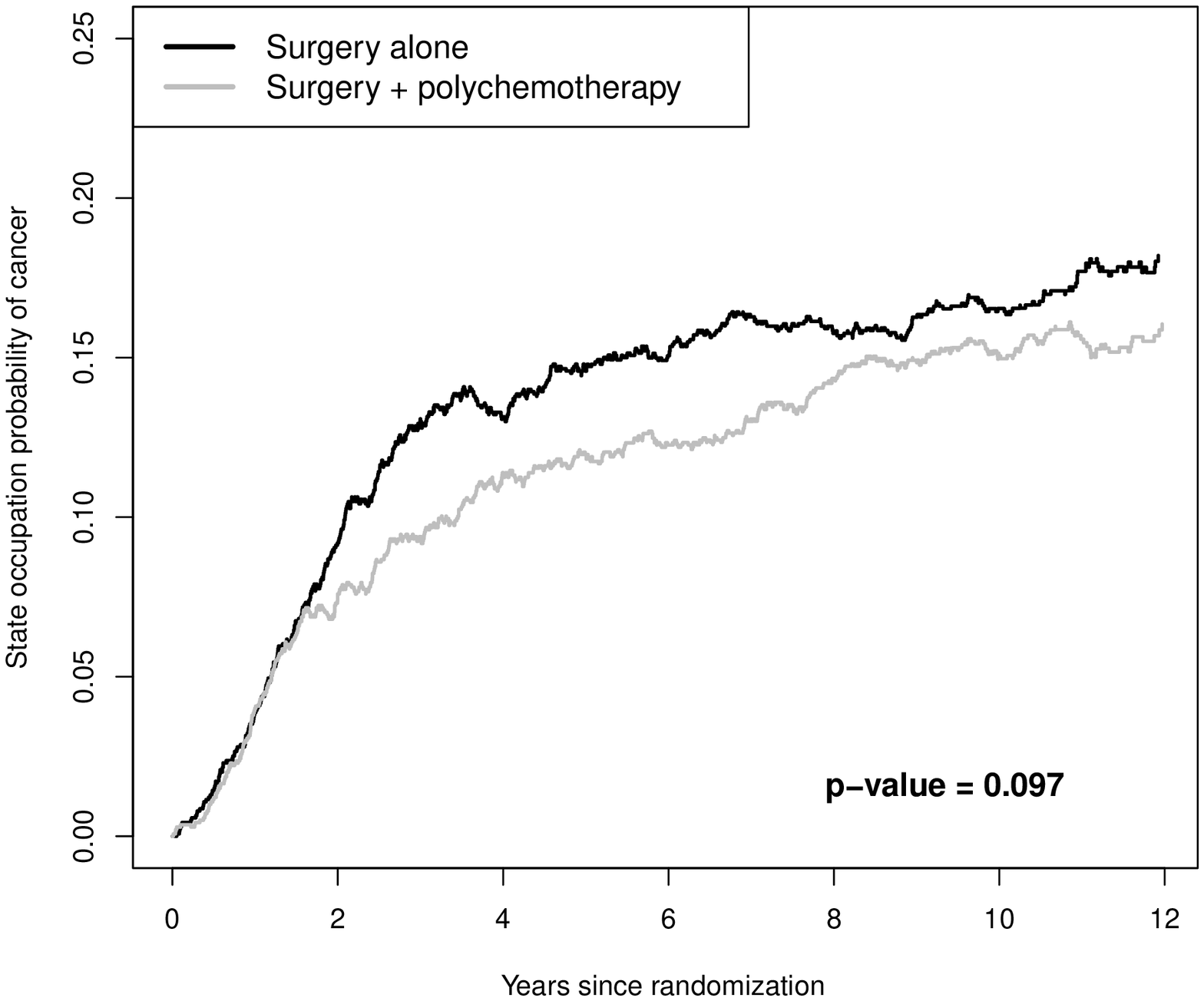}}
\end{center}
\caption{Population-averaged state occupation probabilities of cancer (locoregional relapse, distant metastasis or secondary cancer) for the two arms in the multicenter EORTC trial 10854, along with the $p$-value from the Kolmogorov--Smirnov-type test.
\label{f:test}}
\end{figure}

\section{Discussion}
\label{s:discuss}

In this work we addressed the issue of nonparametric population-averaged inference for multi-state models based on right-censored and/or left-truncated clustered observations. Our estimators for the transition and state occupation probabilities were shown to be uniformly consistent and to converge weakly to tight Gaussian processes with explicit formulas for the corresponding covariance functions. Additionally, we proposed rigorous methodology for the calculation of simultaneous confidence bands and a class of Kolmogorov--Smirnov-type tests. Inference can be performed using either the explicit formulas for the influence functions of the estimators or the nonparametric cluster bootstrap. The latter is particularly useful in practice since it can be used for inference using off-the-shelf software. In this work we did not impose restrictive distributional assumptions or assumptions regarding the within-cluster dependence. Moreover, we allowed for informative cluster size and nonhomogeneous multi-state processes which are non-Markov. Simulation results indicated that the performance of the proposed methods is satisfactory even for non-Markov processes and under an informative cluster size. On the contrary, ignoring the within cluster-dependence lead to invalid inference.

The issue of nonparametric inference for general multi-state models based on clustered observations has not received much attention regardless of its practical importance. So far, to the best of our knowledge, only \citet{Dipankar17} and \citet{OKeeffe18} have provided solutions to this problem. However, the former approach is for current status data and not the usual right-censored or left-truncated multi-state data, and the latter provides cluster-specific inference which may not be of scientific interest in many applications. Moreover, the asymptotic properties of these methods have not been established and, also, there is no methodology for simultaneous confidence bands and nonparametric tests. Our work has addressed this significant gap in the literature of multi-state models. % with clustered observations.

We can see two extensions of the proposed framework that will be useful in medical research. First, many studies that use electronic health record data involve incomplete state ascertainment \citep[see e.g.][]{Bakoyannis19}. For such situations it would be useful to propose appropriate extensions of the proposed method. Second, multi-state processes are frequently observed at a particular time point only \citep[current status data, see e.g.][]{Dipankar17} or at a panel of discrete observation times. Adapting the proposed methodology to such observation schemes would be particularly useful. However, we expect that the rate of convergence of such nonparametric estimators will be slower than $\sqrt{n}$, and this would make inference more difficult.

\section*{Acknowledgements}
This project was supported by the National Institute Of Allergy And Infectious Diseases grant number R21AI145662 and the Indiana Clinical and Translational Sciences Institute funded, in part by Grant Number UL1TR002529 from the National Institutes of Health, National Center for Advancing Translational Sciences, Clinical and Translational Sciences Award. We would like to thank the European Organisation for Research and Treatment of Cancer (EORTC) for sharing with us the data from the EORTC trial 10854. The content of this manuscript is solely the responsibility of the authors and does not necessarily represent the official views of the National Institutes of Health and the EORTC.\vspace*{-8pt}

\bibliographystyle{chicago}
\bibliography{references}

\begin{thebibliography}{}

\bibitem[\protect\citeauthoryear{Aalen, Borgan, and Gjessing}{Aalen
  et~al.}{2008}]{Aalen08}
Aalen, O., O.~Borgan, and H.~Gjessing (2008).
\newblock {\em Survival and event history analysis: a process point of view}.
\newblock Springer Science \& Business Media.

\bibitem[\protect\citeauthoryear{Aalen and Johansen}{Aalen and
  Johansen}{1978}]{Aalen78}
Aalen, O.~O. and S.~Johansen (1978).
\newblock An empirical transition matrix for non-homogeneous markov chains
  based on censored observations.
\newblock {\em Scandinavian Journal of Statistics\/}~{\em 5\/}(3), 141--150.

\bibitem[\protect\citeauthoryear{Andersen, Borgan, Gill, and Keiding}{Andersen
  et~al.}{2012}]{Andersen12}
Andersen, P.~K., O.~Borgan, R.~D. Gill, and N.~Keiding ({2012}).
\newblock {\em Statistical models based on counting processes}.
\newblock Springer Science \& Business Media.

\bibitem[\protect\citeauthoryear{Bakoyannis}{Bakoyannis}{2019}]{Bakoyannis19b}
Bakoyannis, G. (2019).
\newblock Nonparametric tests for transition probabilities in nonhomogeneous
  markov processes.
\newblock {\em Journal of Nonparametric Statistics\/}~{\em 0\/}(0), 1--26.

\bibitem[\protect\citeauthoryear{Bakoyannis, Zhang, and Yiannoutsos}{Bakoyannis
  et~al.}{2019}]{Bakoyannis19}
Bakoyannis, G., Y.~Zhang, and C.~T. Yiannoutsos (2019).
\newblock Nonparametric inference for {M}arkov processes with missing absorbing
  state.
\newblock {\em Statistica Sinica\/}~{\em 29\/}(4), 2083--2104.

\bibitem[\protect\citeauthoryear{Bluhmki, Schmoor, Dobler, Pauly, Finke,
  Schumacher, and Beyersmann}{Bluhmki et~al.}{2018}]{Bluhmki18}
Bluhmki, T., C.~Schmoor, D.~Dobler, M.~Pauly, J.~Finke, M.~Schumacher, and
  J.~Beyersmann (2018).
\newblock A wild bootstrap approach for the {A}alen--{J}ohansen estimator.
\newblock {\em Biometrics\/}~{\em 74\/}(3), 977--985.

\bibitem[\protect\citeauthoryear{Chen and Zhou}{Chen and Zhou}{2013}]{Chen13}
Chen, B. and X.-H. Zhou (2013).
\newblock A correlated random effects model for non-homogeneous markov
  processes with nonignorable missingness.
\newblock {\em Journal of Multivariate Analysis\/}~{\em 117}, 1--13.

\bibitem[\protect\citeauthoryear{Cook, Yi, Lee, and Gladman}{Cook
  et~al.}{2004}]{Cook04}
Cook, R., G.~Yi, K.-A. Lee, and D.~Gladman (2004).
\newblock A conditional markov model for clustered progressive multistate
  processes under incomplete observation.
\newblock {\em Biometrics\/}~{\em 60\/}(2), 436--443.

\bibitem[\protect\citeauthoryear{Datta and Satten}{Datta and
  Satten}{2001}]{Datta01}
Datta, S. and G.~A. Satten (2001).
\newblock Validity of the {A}alen--{J}ohansen estimators of stage occupation
  probabilities and {N}elson--{A}alen estimators of integrated transition
  hazards for non-{M}arkov models.
\newblock {\em Statistics \& Probability Letters\/}~{\em 55\/}(4), 403--411.

\bibitem[\protect\citeauthoryear{Glidden}{Glidden}{2002}]{Glidden02}
Glidden, D.~V. (2002).
\newblock Robust inference for event probabilities with non-markov event data.
\newblock {\em Biometrics\/}~{\em 58\/}(2), 361--368.

\bibitem[\protect\citeauthoryear{Kosorok}{Kosorok}{2008}]{Kosorok08}
Kosorok, M.~R. (2008).
\newblock {\em Introduction to empirical processes and semiparametric
  inference}.
\newblock Springer.

\bibitem[\protect\citeauthoryear{Lan, Bandyopadhyay, and Datta}{Lan
  et~al.}{2017}]{Dipankar17}
Lan, L., D.~Bandyopadhyay, and S.~Datta (2017).
\newblock Non-parametric regression in clustered multistate current status data
  with informative cluster size.
\newblock {\em Statistica Neerlandica\/}~{\em 71\/}(1), 31--57.

\bibitem[\protect\citeauthoryear{Li and Zhang}{Li and Zhang}{2015}]{Li15}
Li, Y. and Q.~Zhang (2015).
\newblock A weibull multi-state model for the dependence of progression-free
  survival and overall survival.
\newblock {\em Statistics in Medicine\/}~{\em 34\/}(17), 2497--2513.

\bibitem[\protect\citeauthoryear{Nair}{Nair}{1984}]{Nair84}
Nair, V.~N. (1984).
\newblock Confidence bands for survival functions with censored data: a
  comparative study.
\newblock {\em Technometrics\/}~{\em 26\/}(3), 265--275.

\bibitem[\protect\citeauthoryear{O’Keeffe, Su, and Farewell}{O’Keeffe
  et~al.}{2018}]{OKeeffe18}
O’Keeffe, A., L.~Su, and V.~Farewell (2018).
\newblock Correlated multistate models for multiple processes: an application
  to renal disease progression in systemic lupus erythematosus.
\newblock {\em Journal of the Royal Statistical Society. Series C: Applied
  Statistics\/}~{\em 67\/}(4), 841--860.

\bibitem[\protect\citeauthoryear{Pepe and Cai}{Pepe and Cai}{1993}]{Pepe93}
Pepe, M.~S. and J.~Cai (1993).
\newblock Some graphical displays and marginal regression analyses for
  recurrent failure times and time dependent covariates.
\newblock {\em Journal of the American statistical Association\/}~{\em
  88\/}(423), 811--820.

\bibitem[\protect\citeauthoryear{Putter and Spitoni}{Putter and
  Spitoni}{2018}]{Putter18}
Putter, H. and C.~Spitoni (2018).
\newblock Non-parametric estimation of transition probabilities in non-{M}arkov
  multi-state models: The landmark {A}alen--{J}ohansen estimator.
\newblock {\em Statistical Methods in Medical Research\/}~{\em 27\/}(7),
  2081--2092.

\bibitem[\protect\citeauthoryear{Seaman, Pavlou, and Copas}{Seaman
  et~al.}{2014}]{Seaman14}
Seaman, S.~R., M.~Pavlou, and A.~J. Copas (2014).
\newblock Methods for observed-cluster inference when cluster size is
  informative: A review and clarifications.
\newblock {\em Biometrics\/}~{\em 70\/}(2), 449--456.

\bibitem[\protect\citeauthoryear{Titman}{Titman}{2015}]{Titman15}
Titman, A.~C. (2015).
\newblock Transition probability estimates for non-{M}arkov multi-state models.
\newblock {\em Biometrics\/}~{\em 71\/}(4), 1034--1041.

\bibitem[\protect\citeauthoryear{Van~der Hage, van De~Velde, Julien, Floiras,
  Delozier, Vandervelden, Duchateau, et~al.}{Van~der Hage et~al.}{2001}]{EORTC}
Van~der Hage, J., C.~van De~Velde, J.-P. Julien, J.-L. Floiras, T.~Delozier,
  C.~Vandervelden, L.~Duchateau, et~al. (2001).
\newblock Improved survival after one course of perioperative chemotherapy in
  early breast cancer patients: long-term results from the {E}uropean
  {O}rganization for {R}esearch and {T}reatment of {C}ancer ({EORTC}) {T}rial
  10854.
\newblock {\em European {J}ournal of {C}ancer\/}~{\em 37\/}(17), 2184--2193.

\bibitem[\protect\citeauthoryear{Van~der Vaart}{Van~der Vaart}{2000}]{Van00}
Van~der Vaart, A.~W. (2000).
\newblock {\em Asymptotic statistics}.
\newblock Cambridge University Press.

\bibitem[\protect\citeauthoryear{Van Der~Vaart and Wellner}{Van Der~Vaart and
  Wellner}{1996}]{Van96}
Van Der~Vaart, A.~W. and J.~A. Wellner (1996).
\newblock {\em Weak convergence and empirical processes with applications to
  Statistics}.
\newblock Springer.

\bibitem[\protect\citeauthoryear{Yiu, Farewell, and Tom}{Yiu
  et~al.}{2018}]{Yiu18}
Yiu, S., V.~Farewell, and B.~Tom (2018).
\newblock Clustered multistate models with observation level random effects,
  mover–stayer effects and dynamic covariates: modelling transition
  intensities and sojourn times in a study of psoriatic arthritis.
\newblock {\em Journal of the Royal Statistical Society. Series C: Applied
  Statistics\/}~{\em 67\/}(2), 481--500.

\end{thebibliography}

\begin{appendices}

\section*{Appendix A: Asymptotic Theory Proofs}
\renewcommand{\thesubsection}{A.\arabic{subsection}}

The proofs of the theorems provided in Section 2 of the manuscript rely on empirical process theory \citep{Van96,Kosorok08}. In this Appendix we use the standard empirical processes notation
\[
\mathbb{P}_nf=\frac{1}{n}\sum_{i=1}^nf({\bf{D}}_i), \textrm{ and } Pf=\int_{\mathcal{D}} fdP=Ef,
\]
where, for any measurable function $f:\mathcal{D}\mapsto\mathbb{R}$, ${\bf{D}}_i$ denotes the observed variables for the $i$th cluster, $\mathcal{D}$ denotes the sample space, and $P$ the true (induced) probability measure defined on the Borel $\sigma$-algebra on $\mathcal{D}$. We also use the supremum norm notation $\|f(t)\|_{\infty}\equiv\sup_{t\in[0,\tau]}|f(t)|$. Let $V$ be a generic constant that may differ from place to place. In this Appendix we only prove the asymptotic properties of $\hat{\mathbf{P}}_n(s,t)$ since the properties of $\hat{\mathbf{P}}_n'(s,t)$, $\hat{P}_{n,j}(t)$, and $\hat{P}_{n,j}'(t)$, $j\in\mathcal{S}$, can be established using the same arguments. Without loss of generality and for simplicity of presentation we set the starting point $s=0$. Before outlining the proofs of Theorems 1-3 we provide and prove two useful lemmas.

\begin{lemma}
Let $N(t)$ be an arbitrary counting process on $[0,\tau]$ with $P[N(\tau)]^2<\infty$ and $h(t)$ be a fixed and non-negative function with $h(t)\leq V$ almost everywhere with respect to the Lebesgue-Stieltjes measure generated by (the sample paths of) $N(t)$. Then, the class of functions
\[
\mathcal{F}_1(s)=\left\{\int_{s}^t h(u) dN(u) : t\in[s,\tau]\right\},
\]
is $P$-Donsker for any $s\in[0,\tau)$.
\end{lemma}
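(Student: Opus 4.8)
The plan is to establish the $P$-Donsker property through a bracketing-entropy argument that exploits the linear ordering of the family $\{f_t:t\in[s,\tau]\}$, where I write $f_t\equiv\int_s^t h(u)\,dN(u)$ for the generic element of $\mathcal{F}_1(s)$ viewed as a function of the counting-process path $N$. Because $h\geq 0$ and $N$ is non-decreasing, $t_1\leq t_2$ forces $f_{t_1}\leq f_{t_2}$ pointwise, so each pair $(f_{t_1},f_{t_2})$ furnishes a bracket $[f_{t_1},f_{t_2}]$ containing every $f_t$ with $t\in[t_1,t_2]$. This monotone structure is the engine of the whole argument and is what makes the bracketing number small.

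First I would record a square-integrable envelope: using $h\leq V$ almost everywhere with respect to the measure generated by $N$, every element obeys $|f_t|\leq\int_s^\tau h\,dN\leq V\,N(\tau)=:F$, and by hypothesis $PF^2\leq V^2\,P[N(\tau)^2]<\infty$. Next I would bound the $L_2(P)$-size of the natural brackets. For $t_1\leq t_2$,
\[
\|f_{t_2}-f_{t_1}\|_{L_2(P)}^2=P\left[\left(\int_{t_1}^{t_2}h\,dN\right)^2\right]\leq V^2\,P\left[(N(t_2)-N(t_1))^2\right]\leq V^2\,P\left[N(\tau)(N(t_2)-N(t_1))\right],
\]
where the final step uses the pointwise inequality $N(t_2)-N(t_1)\leq N(\tau)$. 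Defining the non-decreasing, right-continuous function $G(t)=P[N(\tau)(N(t)-N(s))]$, which is bounded by $C:=P[N(\tau)^2]<\infty$, the bracket size is therefore at most $V\sqrt{G(t_2)-G(t_1)}$.

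Then, for each $\epsilon>0$, I would partition $[s,\tau]$ into subintervals on which $G$ increases by at most $\epsilon^2/V^2$, isolating the jump points of $G$ as degenerate single-index brackets built from the well-defined left limits $f_{t^-}$. Since $G$ has total increase at most $C$, both the number of jumps exceeding $\epsilon^2/V^2$ and the number of intervals needed for the continuous part are $O(V^2C/\epsilon^2)$, so $N_{[]}(\epsilon,\mathcal{F}_1(s),L_2(P))=O(1/\epsilon^2)$ and $\log N_{[]}(\epsilon,\mathcal{F}_1(s),L_2(P))=O(\log(1/\epsilon))$. Consequently the bracketing integral $\int_0^{\|F\|_{L_2(P)}}\sqrt{\log N_{[]}(\epsilon,\mathcal{F}_1(s),L_2(P))}\,d\epsilon$ is finite, and the $P$-Donsker conclusion follows from the standard bracketing central limit theorem (see \citet{Van96}).

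The main obstacle will be step three: handling the atoms of the measure generated by $N$ so that the jump discontinuities of $t\mapsto f_t$ do not force oversized brackets and the bracket count stays polynomial in $1/\epsilon$. The key realization that defuses this is that each jump time corresponds to a single index value, which can be bracketed exactly by a degenerate bracket, so a single bracket never has to span a discontinuity; the continuous growth of $G$ between jumps is all that must be finely partitioned.
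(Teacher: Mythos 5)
Your proof is correct, but it reaches the Donsker conclusion by a genuinely different route than the paper. The paper works with \emph{uniform} entropy: it invokes Lemma 22.4 of \citet{Kosorok08} to get a bounded uniform entropy integral (BUEI) for the monotone class $\Phi_1=\{N(t):t\in[0,\tau]\}$, transfers covering numbers to $\mathcal{F}_1(s)$ via the Lipschitz-type bound $\|\int_s^{t}h\,dN-\int_s^{t_i}h\,dN\|_{Q,2}\leq V\|N(t)-N(t_i)\|_{Q,2}$ over all finitely discrete $Q$, separately verifies pointwise measurability, and concludes with Theorem 2.5.2 of \citet{Van96}. You instead exploit the same monotone structure through \emph{bracketing}: since $h\geq 0$ and $N$ is nondecreasing, $t\mapsto f_t$ is pointwise nondecreasing, the increment bound $P[(\int_{t_1}^{t_2}h\,dN)^2]\leq V^2P[N(\tau)(N(t_2)-N(t_1))]$ is valid (using $N(t_2)-N(t_1)\leq N(\tau)$ pointwise), and partitioning $[s,\tau]$ by the increments of the bounded nondecreasing function $G(t)=P[N(\tau)(N(t)-N(s))]$ — with degenerate brackets at partition points and left limits $f_{t^-}$ as upper bracket endpoints, exactly as you describe — yields $N_{[]}(\epsilon,\mathcal{F}_1(s),L_2(P))=O(\epsilon^{-2})$ and hence a finite bracketing integral, so Ossiander's bracketing CLT (Theorem 2.5.6 in \citet{Van96}) applies. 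Your route buys two things: bracketing theorems require no separate measurability verification, so you dispense with the PM step that the paper only sketches by analogy to an example in \citet{Kosorok08}, and the argument is self-contained and explicit about the bracket count. What the paper's route buys is composability: BUEI together with PM is preserved under the sums, products, and minima used repeatedly downstream (via corollary 9.32 in \citet{Kosorok08}) in the proofs of Theorems 1--3, whereas bracketing numbers enjoy weaker preservation properties, so the paper's form of the lemma plugs directly into the rest of the appendix. One point deserving care in a full write-up of your version: the partition of $[s,\tau]$ should be defined from $G$ itself (e.g., $t_j=\inf\{t>t_{j-1}:G(t)-G(t_{j-1})>\epsilon^2/V^2\}$), which handles the jump and continuous parts of $G$ simultaneously and makes rigorous the $O(V^2C/\epsilon^2)$ count you assert; your sketch of this step is right in substance.
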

\begin{proof}
Let $\|h\|_{Q,2}=(\int |h|^2dQ)^{1/2}$ for any probability measure $Q$. Now, for any probability measure $Q$ and any $t_1,t_2\in[0,\tau]$ it follows that
\begin{eqnarray*}
\left\|\int_{s}^{t_1}h(u)dN(u) - \int_{s}^{t_2}h(u)dN(u) \right\|_{Q,2}&\leq& \left\|\int_{t_1}^{t_2}h(u)dN(u)\right\|_{Q,2}  \\
&\leq& V\|N(t_2)-N(t_1)\|_{Q,2}.
\end{eqnarray*}
By lemma 22.4 in \citet{Kosorok08}, it follows that the class $\Phi_1=\{N(t):t\in[0,\tau]\}$ has a bounded uniform entropy integral (BUEI) with envelope $2N(\tau)$, and is also pointwise measurable (PM). This implies that, for any $t\in[0,\tau]$ there exist a $t_i\in[0,\tau]$, $i=1,\ldots,N(\epsilon2\|N(\tau)\|_{Q,2},\Phi_1,L_2(Q))$, such that $\|N(t)-N(t_i)\|_{Q,2}<\epsilon2\|N(\tau_2)\|_{Q,2}$, for any $\epsilon>0$ and any finitely discrete probability measure $Q$. Therefore, for any member of $\mathcal{F}_1(s)$, there exist a $\int_{s}^{t_i}h(u)dN(u)$, for $i=1,\ldots,N(\epsilon 2\|N(\tau)\|_{Q,2},\Phi_1,L_2(Q))$, such that
\[
\left\|\int_{s}^{t}h(u)dN(u) - \int_{s}^{t_i}h(u)dN(u) \right\|_{Q,2}\leq \epsilon 2V\|N(\tau)\|_{Q,2},
\]
for any $\epsilon>0$ and any finitely discrete probability measure $Q$. Consequently, by the minimality of the covering number it follows that for any $\epsilon>0$ and any finitely discrete probability measure $Q$, we have that
\[
N(\epsilon 2V\|N(\tau)\|_{Q,2},\mathcal{F}_1(s),L_2(Q))\leq N(\epsilon2\|N(\tau)\|_{Q,2},\Phi_1,L_2(Q)),
\]
which yields a BUEI for $\mathcal{F}_1(s)$ with envelope $2VN(\tau)$. Using similar arguments to those used in the example of page 142 of \citet{Kosorok08}, it can be shown that the class $\mathcal{F}_1(s)$ is also PM. Therefore, by Theorem 2.5.2 in \citet{Van96}, the class $\mathcal{F}_1(s)$ is $P$-Donsker. Since $s$ was arbitrary, the last statement is true for any $s\in[0,\tau)$.
\end{proof}

\begin{lemma}
Let $Y(t)$ be an arbitrary at-risk process, $A(t)$ a continuous cumulative transition intensity function on $[0,\tau]$, and $h(t)$ a fixed and non-negative function with $h(t)\leq V$ almost everywhere with respect to the Lebesgue-Stieltjes measure generated by $A(t)$. Then, the class of functions
\[
\mathcal{F}_2(s)=\left\{\int_{s}^t h(u)Y(u) dA(u) : t\in[0,\tau]\right\}
\]
is $P$-Donsker for any $s\in[0,\tau)$.
\end{lemma}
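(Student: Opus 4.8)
The plan is to mirror the proof of Lemma 1 step for step, with the counting process $N$ replaced by the monotone process $\tilde{Z}(t)=\int_0^t Y(u)\,dA(u)$. The structural fact that makes the parallel work is monotonicity: since $h$ and $Y$ are non-negative and $A$ (an off-diagonal cumulative intensity) is non-decreasing, for each realization the map $t\mapsto\int_s^t h(u)Y(u)\,dA(u)$ is non-decreasing, so $\mathcal{F}_2(s)$ is a monotone process in the index $t$, with pointwise envelope $F=\int_s^\tau h(u)Y(u)\,dA(u)\le V\,A(\tau)\,\sup_u Y(u)$.

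First I would establish the key Lipschitz-type inequality relating $\mathcal{F}_2(s)$ to the base class $\Phi_2=\{\tilde{Z}(t):t\in[0,\tau]\}$. For any finitely discrete probability measure $Q$ and any $t_1<t_2$, the bound $h\le V$ together with $Y\ge 0$ and $dA\ge 0$ gives the pointwise inequality $\int_{t_1}^{t_2}h(u)Y(u)\,dA(u)\le V\int_{t_1}^{t_2}Y(u)\,dA(u)$, so that
\[
\left\|\int_s^{t_1}hY\,dA-\int_s^{t_2}hY\,dA\right\|_{Q,2}\le V\,\|\tilde{Z}(t_2)-\tilde{Z}(t_1)\|_{Q,2}.
\]
This is the exact analogue of the bound $\|\int_{t_1}^{t_2}h\,dN\|_{Q,2}\le V\|N(t_2)-N(t_1)\|_{Q,2}$ used for $\mathcal{F}_1(s)$.

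Next I would argue that $\Phi_2$ is BUEI and PM. Because $\tilde{Z}(t)=\int_0^t Y\,dA$ is a non-decreasing process of bounded total variation $\tilde{Z}(\tau)$, it is a monotone process, and hence by the same reasoning invoked for $\Phi_1=\{N(t)\}$ in Lemma 1 (Lemma 22.4 in \citet{Kosorok08}) the class $\Phi_2$ has a BUEI with envelope $2\tilde{Z}(\tau)$ and is pointwise measurable. Exactly as in Lemma 1, the displayed inequality then transfers covering numbers: for each member of $\mathcal{F}_2(s)$ there is a $\tilde{Z}$-grid point $t_i$ with $\|\int_s^t hY\,dA-\int_s^{t_i}hY\,dA\|_{Q,2}\le\epsilon\,2V\|\tilde{Z}(\tau)\|_{Q,2}$, so that
\[
N(\epsilon\,2V\|\tilde{Z}(\tau)\|_{Q,2},\mathcal{F}_2(s),L_2(Q))\le N(\epsilon\,2\|\tilde{Z}(\tau)\|_{Q,2},\Phi_2,L_2(Q)),
\]
yielding a BUEI for $\mathcal{F}_2(s)$ with envelope $2V\tilde{Z}(\tau)$. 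Here the continuity of $A$ is actually helpful, since it makes $t\mapsto\int_s^t hY\,dA$ continuous, so the index set may be restricted to the rationals and pointwise measurability follows as in the example on page 142 of \citet{Kosorok08}. Because $\tilde{Z}(\tau)\le A(\tau)\sup_u Y(u)$ with $A(\tau)<\infty$, and an at-risk process is bounded (by the cluster size, whose second moment is finite under the conditions in Appendix A.1), the envelope is square-integrable. Theorem 2.5.2 in \citet{Van96} then delivers the $P$-Donsker conclusion, and since $s$ was arbitrary the statement holds for every $s\in[0,\tau)$.

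The step I expect to require the most care is the claim that $\Phi_2$ (equivalently the monotone process $\tilde{Z}$) is BUEI, since Lemma 22.4 in \citet{Kosorok08} is stated for the integer-valued counting process $N$, whereas $\tilde{Z}$ is a continuous real-valued monotone process. I would therefore make explicit that the argument underlying that lemma uses only monotonicity of the sample paths and bounded total variation, both of which $\tilde{Z}$ possesses; alternatively, one can apply the monotone-process entropy bound directly to $\mathcal{F}_2(s)$ itself, which is already monotone in $t$ with envelope $2F$, thereby bypassing the base class $\Phi_2$ and the transfer inequality altogether.
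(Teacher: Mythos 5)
Your proof is correct, but it takes a genuinely different route from the paper's. The paper never introduces the random base process $\tilde{Z}(t)=\int_0^t Y(u)\,dA(u)$ at all: it exploits the fact that here, unlike in Lemma 1, the $L_2(Q)$-increments of $\mathcal{F}_2(s)$ are controlled by the \emph{deterministic} function $A$, via $\bigl\|\int_s^{t_1}hY\,dA-\int_s^{t_2}hY\,dA\bigr\|_{Q,2}\leq V|A(t_2)-A(t_1)|$ (the boundedness of the at-risk process being absorbed into $V$). Since $A$ is continuous and monotone on the compact interval $[0,\tau]$, the set of fixed values $\Phi_2=\{A(t):t\in[0,\tau]\}$ is a compact subset of $\mathbb{R}$ coverable by $O(1/\epsilon)$ balls, which immediately gives $N(\epsilon V,\mathcal{F}_2(s),L_2(Q))\leq V(1/\epsilon)$ uniformly in $Q$, with a \emph{deterministic} covering radius rather than one scaled by a random envelope norm. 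Thus the step you correctly identify as delicate --- extending Kosorok's Lemma 22.4 from integer-valued counting processes to the continuous monotone process $\tilde{Z}$ --- simply never arises in the paper's argument. Your route is nonetheless sound: the transfer inequality $\bigl\|\int_s^{t_1}hY\,dA-\int_s^{t_2}hY\,dA\bigr\|_{Q,2}\leq V\|\tilde{Z}(t_2)-\tilde{Z}(t_1)\|_{Q,2}$ is valid, and your fallback observation --- that a class which is pointwise nondecreasing in its index is linearly ordered, hence of VC-subgraph type with index 1, so BUEI with envelope $2V\tilde{Z}(\tau)$, bypassing the base class entirely --- closes the gap you flagged; pointwise measurability and square-integrability of the envelope (via $\tilde{Z}(\tau)\leq A(\tau)\sup_u Y(u)$ with $Y$ bounded) are handled as you say. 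In exchange, your argument is more general, using only monotonicity of $t\mapsto\int_s^t hY\,dA$ and so surviving situations where the deterministic increment bound would fail (e.g., an unbounded but square-integrable $Y$), whereas the paper's argument is more elementary and yields the slightly stronger deterministic modulus of continuity in the covering bound.
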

\begin{proof}
It is not hard to show that for any probability measure $Q$ and any $t_1,t_2\in[0,\tau]$
\[
\left\|\int_{s}^{t_1}h(u)Y(u)dA(u) - \int_{s}^{t_2}h(u)Y(u)dA(u) \right\|_{Q,2}\leq V|A(t_2)-A(t_1)|.
\]
Now, the class of fixed functions $\Phi_2=\{A(t):t\in[0,\tau]\}$ is a compact subset of $\mathbb{R}$ as it consists of continuous functions on the compact set $[0,\tau]$. Therefore, this class of fixed functions can be covered by $V(1/\epsilon)$ $\epsilon$-balls and, thus, $N(\epsilon,\Phi_2,|\cdot|)\leq V(1/\epsilon)$.
Consequently, for any $t\in[0,\tau]$ there exist a $t_i\in[0,\tau]$, $i=1,\ldots,N(\epsilon,\Phi_2,|\cdot|)$, such that $|A(t)-A(t_i)|<\epsilon$, for any $\epsilon>0$ and any finitely discrete probability measure $Q$. Therefore, for any member of $\mathcal{F}_2(s)$, there exist a $\int_{s}^{t_i}h(u)Y(u)dA(u)$, for $i=1,\ldots,N(\epsilon,\Phi_2,|\cdot|)$, such that
\[
\left\|\int_{s}^{t}h(u)Y(u)dA(u) - \int_{s}^{t_i}h(u)Y(u)dA(u) \right\|_{Q,2}\leq V\epsilon.
\]
for any $\epsilon>0$ and any finitely discrete probability measure $Q$. Consequently, by the minimality of the covering number, it follows that for any $\epsilon>0$ and any finitely discrete probability measure $Q$, we have that
\[
N(\epsilon V,\mathcal{F}_2(s),L_2(Q))\leq V\left(\frac{1}{\epsilon}\right),
\]
which yields a BUEI for $\mathcal{F}_2(s)$. Finally, similar arguments to those used in the proof of Lemma 1 lead to the conclusion that the class $\mathcal{F}_2(s)$ is $P$-Donsker for any $s\in[0,\tau)$.
\end{proof}

\subsection{Regularity conditions}
In this work we assume the following conditions:
\begin{itemize}
\item[C1.] The potential left truncation $L_{im,1}$ and right censoring $L_{im,2}$ times are independent of the underlying counting processes $\{\check{N}_{im,hj}(t):h\neq j, t\in[0,\tau]\}$ and the cluster size $M_i$. Also, $L_{im,1}$ and $L_{im,2}$ are exchangeable in the sense that $EI(L_{im,1}<t,L_{im,2}\geq t)\equiv ER_{im}(t)=ER_{i1}(t)$ for any $i=1,\ldots,n$ and $m=1,\ldots,M_i$.
\item[C2.] The cluster size is bounded in the sense that there exists a (fixed) positive integer $m_0$ such that $\Pr(M>m_0)=0$.
\item[C3.] The underlying counting processes are exchangeable conditionally on cluster size, in the sense that $E[\check{N}_{im,hj}(t)|M_i]=E[\check{N}_{i1,hj}(t)|M_i]$ for any $i=1,\ldots,n$, $m=1,\ldots,M_i$ and $h\neq j$. Also, $E[\check{N}_{im,hj}(\tau)]^2<\infty$ for all $h\neq j$.
\item[C4.] The underlying at-risk processes are exchangeable conditionally on cluster size, in the sense that $E[\check{Y}_{im,h}(t)|M_i]=E[\check{Y}_{i1,h}(t)|M_i]$ for any $i=1,\ldots,n$, $m=1,\ldots,M_i$ and $h\in\mathcal{S}$. Also, there exists a convex and compact set $J_h\subset[0,\tau]$ such that $\inf_{t\in J_h}E[\sum_{m=1}^{M_i}Y_{im,h}(t)]>0$ for all $h\in\mathcal{T}^c$, and $\int_{(0,t]\cap J_h^c}dA_{0,hj}(t)=0$ for all $h\in\mathcal{T}^c$ and $j\neq h$.
\item[C5.] The cumulative transition intensities $\{A_{0,hj}(t):h\neq j, t\in[0,\tau]\}$ are continuous functions.
\item[C6.] Strengthen condition C4 to require $\inf_{t\in [0,\tau]}E[\sum_{m=1}^{M_i}Y_{im,h}(t)]>0$ for all $h\in\mathcal{T}^c$.
\end{itemize}

Conditions C1, C5, and the second parts of conditions C3 and C4 ensure that the standard Aalen--Johansen estimator \citep{Aalen78} of $\mathbf{P}_0$ based on i.i.d. data is uniformly consistent and its elements convergence weakly to tight Gaussian processes.  The additional conditions needed for the situation with clustered data are that cluster sizes are bounded (condition C2), and that the counting and at-risk processes are exchangeable conditionally on cluster size. These additional conditions are realistic in practical applications. Finally, the additional condition C6 is required for the asymptotic linearity of the proposed estimators which provides easy to estimate closed-form variance estimators. In Appendix B we relax condition C6 and show that weak convergence and the validity of the nonparametric cluster bootstrap still hold. In light of the conditional exchangeability of the counting processes (condition C3), condition C2, and the i.i.d. assumption of the observations across clusters it follows that for $h\neq j$ and any $t\in[0,\tau]$
\begin{eqnarray*}
E\left[\sum_{m=1}^{M_1}\check{N}_{1m,hj}(t)\right]&=&E\left\{E\left[\sum_{m=1}^{M_1}\check{N}_{1m,hj}(t)\bigg|M_1\right]\right\} \\
&=& E\left\{E\left[\check{N}_{11,hj}(t)\bigg|M_1\right]\sum_{m=1}^{m_0}I(m\leq M_1)\right\} \\
&=&  E\left\{E\left[\check{N}_{1m,hj}(t)\bigg|M_1\right]M_1\right\} \\
&=&  E\left[M_1\check{N}_{1m,hj}(t)\right], \ \ t\in[0,\tau],
\end{eqnarray*}
for any $m=1,\ldots,M_i$. Similarly, under conditions C2 and C4, it can be shown that $E[\sum_{m=1}^{M_1}\check{Y}_{1m,h}(t)]=E[M_1\check{Y}_{1m,h}(t)]$, $h\in\mathcal{T}^c$, $t\in[0,\tau]$, for any cluster member $m=1,\ldots,M_1$.

For the nonparametric two-sample Kolmogorov--Smirnov tests we refine conditions C3, C4 and C6 as follows:
\begin{itemize}
\item[C3'.] The underlying counting processes are exchangeable conditionally on cluster size, in the sense that $E[\check{N}_{ipm,hj}(t)|M_{pi}]=E[\check{N}_{ip1,hj}(t)|M_{pi}]$ for any $i=1,\ldots,n$, $p=1,2$, $m=1,\ldots,M_{pi}$ and $h\neq j$. Also, $E[\check{N}_{ipm,hj}(\tau)]^2<\infty$ for all $h\neq j$.
\item[C4'.] The underlying at-risk processes are exchangeable conditionally on cluster size, in the sense that $E[\check{Y}_{ipm,h}(t)|M_{pi}]=E[\check{Y}_{ip1,h}(t)|M_{pi}]$ for any $i=1,\ldots,n$, $p=1,2$, $m=1,\ldots,M_{pi}$ and $h\in\mathcal{S}$. Also, there exists a compact set $J_h\subset[0,\tau]$ such that $\inf_{t\in J_h}E[\sum_{m=1}^{M_{pi}}Y_{ipm,h}(t)]>0$, $p=1,2$, for all $h\in\mathcal{T}^c$, and $\int_{J_h^c}dA_{0,phj}(t)=0$, $p=1,2$, for all $h\in\mathcal{T}^c$ and $j\neq h$.
\item[C6'.] Strengthen condition C4' to require $\inf_{t\in [0,\tau]}E[\sum_{m=1}^{M_{pi}}Y_{ipm,h}(t)]>0$, $p=1,2$, for all $h\in\mathcal{T}^c$.
\end{itemize}
Note that the counting and at-risk processes are also allowed to depend on the total cluster size $M_i$. However, the conditional exchangeability assumption in conditions C3' and C4' is defined conditional on the size of the $p$th sample within the $i$th cluster.

\subsection{Proof of Theorem 1}
It is clear that $\check{N}_{im,hj}(t)$, $h\neq j$ can be expressed as
\begin{eqnarray*}
\check{N}_{im,hj}(t)&=&\sum_{v=1}^{\check{N}_{im,hj}(\tau)}I(T_{imv,hj}\leq t) \\
&=&\sum_{v=1}^{v_0}I(v\leq \check{N}_{im,hj}(\tau),T_{imv,hj}\leq t), \ \ a.s.
\end{eqnarray*}
where $T_{imv,hj}$, $v=1,\ldots,\check{N}_{im,hj}(\tau)$, are the random jump times of $\check{N}_{im,hj}(t)$, $t\in[0,\tau]$, and $v_0\in\mathbb{N}$ is a constant which is selected to satisfy $\check{N}_{im,hj}(\tau)\leq v_0$ a.s. in light of condition C3. The corresponding observable version, which is subject to right censoring and/or left truncation, is
\begin{eqnarray*}
N_{im,hj}(t)&=&\sum_{v=1}^{\check{N}_{im,hj}(\tau)}I(T_{imv,hj}\leq t,R_{im}(T_{imv,hj})=1) \\
&=&\sum_{v=1}^{v_0}I(v\leq \check{N}_{im,hj}(\tau),T_{imv,hj}\leq t,R_{im}(T_{imv,hj})=1), \ \ a.s.
\end{eqnarray*}
Thus, by conditions C1 and C2,
\begin{eqnarray*}
EN_{i\cdot,hj}(t)&=&\sum_{m=1}^{m_0}\sum_{v=1}^{v_0}\Pr(m\leq M_i,v\leq \check{N}_{im,hj}(\tau),T_{imv,hj}\leq t,R_{im}(T_{imv,hj})=1) \\
&=&\int_0^tE[R_{i1}(u)]dE\sum_{m=1}^{m_0}\sum_{v=1}^{v_0}I(m\leq M_i,v\leq \check{N}_{im,hj}(\tau),T_{imv,hj}\leq u) \\
&=&\int_0^tE[R_{i1}(u)]dE\check{N}_{i\cdot,hj}(u), \ \ \ \ t\in[0,\tau]
\end{eqnarray*}
Additionally, the observed version of $\check{Y}_{im,h}(t)$, $h\in\mathcal{T}^c$, is $Y_{im,h}(t)=\check{Y}_{im,h}(t)R_{im}(t)$, $t\in[0,\tau]$ and thus, by conditions C1 and C2,
\[
EY_{i\cdot,h}(t)=E[R_{i1}(t)]E[\check{Y}_{i\cdot,h}(t)], \ \ \ \ t\in[0,\tau]
\]
Therefore, using empirical process theory notation and by condition C4 it follows that
\begin{eqnarray*}
\int_0^t\frac{dPN_{\cdot,hj}(u)}{PY_{\cdot,h}(u)}&=&\int_{(0,t]\cap J_h}\frac{dPN_{\cdot,hj}(u)}{PY_{\cdot,h}(u)} \\
&=&\int_{(0,t]\cap J_h}\frac{PR_1(u)dP\check{N}_{\cdot,hj}(u)}{PR_1(u)P\check{Y}_{\cdot,h}(u)} \\
&=&A_{0,hj}(t)
\end{eqnarray*}
since condition C4 ensures $\inf_{t\in[0,t]\cap J_h}PR_{\cdot,h}(t)>0$ and $\int_{(0,t]\cap J_h^c}dA_{0,hj}(t)=0$. Next, it is easy to see that, for any $h\in\mathcal{T}^c$ and $j\in\mathcal{S}$, the following inequality holds:
\begin{eqnarray}
\left\|\hat{A}_{n,hj}(t)-A_{0,hj}(t)\right\|_{\infty}&\leq&\left\|\mathbb{P}_n\int_{(0,t]}\left[\frac{1}{\mathbb{P}_nY_{\cdot,h}(u)}-\frac{1}{PY_{\cdot,h}(u)}\right]dN_{\cdot,hj}(u)\right\|_{\infty} \nonumber \\
&&+\left\|(\mathbb{P}_n-P)\int_{(0,t]}\frac{dN_{\cdot,hj}(u)}{PY_{\cdot,h}(u)}\right\|_{\infty} \nonumber \label{ineq_A}\\
&\equiv& Q_{n,1}+Q_{n,2}.
\end{eqnarray}
The first term can be bounded as follows:
\begin{eqnarray*}
Q_{n,1}&\leq&\|\mathbb{P}_nY_{\cdot,h}(t)-PY_{\cdot,h}(t)\|_{\infty}\left\|\mathbb{P}_n\int_{(0,t]}\frac{dN_{\cdot,hj}(u)}{\mathbb{P}_nY_{\cdot,h}(u)PY_{\cdot,h}(u)}\right\|_{\infty} \\
&\leq&V\|\mathbb{P}_nY_{\cdot,h}(t)-PY_{\cdot,h}(t)\|_{\infty}\left\|\mathbb{P}_n\int_{(0,t]}\frac{dN_{\cdot,hj}(u)}{\mathbb{P}_nY_{\cdot,h}(u)}\right\|_{\infty}
\end{eqnarray*}
where the last inequality follows from condition C4, which implies that there exists a constant $V$ such that $[PY_{\cdot,h}(t)]^{-1}\leq V$ a.e. ($\mu_{N_{\cdot,hj}}$), with $\mu_{N_{\cdot,hj}}$ being the Lebesgue--Stieltjes measure generated by (the sample paths of) $N_{\cdot,hj}(t)$. By conditions C2 and C3, the class of functions $\{Y_{\cdot,h}(t)=\sum_{m=1}^{m_0}I(m\leq M)Y_{m,h}(t):t\in[0,\tau]\}$ can be expressed as a (finite) linear combination of monotone caglad square-integrable processes \citep{Andersen12}, multiplied by $R_{m}(t)$, which belongs to a Donsker class by lemma 4.1. Therefore, by lemma 4.1 and corollary 9.32 in \citet{Kosorok08}, the classes $\{Y_{\cdot,h}(t):t\in[0,\tau]\}$, $h\in\mathcal{T}^c$, are $P$-Donsker and, therefore, also $P$-Glivenko--Cantelli. Consequently, $\|\mathbb{P}_nY_{\cdot,h}(t)-PY_{\cdot,h}(t)\|_{\infty}\overset{as*}\rightarrow 0$. This result and the fact that $[\mathbb{P}_nY_{\cdot,h}(t)]^{-1}$ is bounded a.e. ($\mu_{N_{\cdot,hj}}$) with probability 1 lead to the conclusion that $Q_{n,1}\overset{as*}\rightarrow 0$. For $Q_{n,2}$, conditions C1 and C4 imply that there exists a constant $V$ such that
\[
\frac{1}{PY_{\cdot,h}(t)}\leq V \ \ \ \ a.e. \ \ (\mu_{N_{\cdot,hj}}).
\]
Thus, by conditions C2, C3, and Lemma 1, it follows that the class $\{\int_{(0,t]}[PY_{\cdot,h}(u)]^{-1}dN_{\cdot,hj}(u):t\in[0,\tau]\}$ is $P$-Donsker and thus also $P$-Glivenko--Cantelli. This implies that $Q_{n,2}\overset{as*}\rightarrow 0$ and, consequently, by inequality \eqref{ineq_A} it follows that $\|\hat{A}_{n,hj}(t)-A_{0,hj}(t)\|_{\infty}\overset{as*}\rightarrow 0$, for all $h\in\mathcal{T}^c$ and $j\in\mathcal{S}$. This result along with the continuity of the product integral \citep{Andersen12} lead to the conclusion that
\[
\Prodi_{(0,t]}[\mathbf{I}_k+d\hat{\mathbf{A}}_n(u)]\overset{as*}\rightarrow\Prodi_{(0,t]}[\mathbf{I}_k+d\mathbf{A}_0(u)],
\]
uniformly in $t\in[0,\tau]$.

\subsection{Proof of Theorem 2}
The class of functions $\{N_{\cdot,hj}(t)=\sum_{m=1}^{m_0}I(m\leq M)N_{m,hj}(t):[0,\tau]\}$ is $P$-Donsker for any $h\in\mathcal{T}^c$ and $j\in\mathcal{S}$, by conditions C2 and C3, and lemma 4.1 and corollary 9.32 in \citet{Kosorok08}. Also, the class $\{Y_{\cdot,h}(t):[0,\tau]\}$ is $P$-Donsker for any $h\in\mathcal{T}^c$ as argued in the proof of Theorem 1. Therefore
\[
\sqrt{n}
\begin{pmatrix}
\mathbb{P}_nN_{\cdot,hj}-PN_{\cdot,hj} \\
\mathbb{P}_nY_{\cdot,h}-PY_{\cdot,h}
\end{pmatrix}
\leadsto
\begin{pmatrix}
\tilde{\mathbb{G}}_{1hj} \\
\tilde{\mathbb{G}}_{2h}
\end{pmatrix}
\ \ \ \ \textrm{in} \ \ (D[0,\tau])^2,
\]
for $h\neq j$, where $\tilde{\mathbb{G}}_{1hj}$ and $\tilde{\mathbb{G}}_{2h}$ are tight zero mean Gaussian processes with covariance functions $PN_{\cdot,hj}(t_1)N_{\cdot,hj}(t_2)-PN_{\cdot,hj}(t_1)PN_{\cdot,hj}(t_2)$ and $PY_{\cdot,h}(t_1)Y_{\cdot,h}(t_2)-PY_{\cdot,h}(t_1)PY_{\cdot,h}(t_2)$, respectively, for $t_1,t_2\in[0,\tau]$. The cross-covariance between $\tilde{\mathbb{G}}_{1hj}(t_1)$ and $\tilde{\mathbb{G}}_{2h}(t_2)$ is $PN_{\cdot,hj}(t_1)Y_{\cdot,h}(t_2)-PN_{\cdot,hj}(t_1)PY_{\cdot,h}(t_2)$. Moreover, the map $(F_1,F_2)\mapsto\int_{[0,t]}F_1^{-1}dF_2$ is Hadamard differentiable on the domain
\[
\left\{(F_1,F_2):\inf_{t\in [0,\tau]}|F_1(t)|\geq \epsilon, \int_{[0,\tau]}|dF_2(t)|<\infty\right\}
\]
for $\epsilon>0$ and $F_1^{-1}$ of bounded variation \citep{Kosorok08}, with derivative at $(f_1,f_2)$ given by
\[
\int_{[0,t]} \frac{df_1}{F_2}-\int_{[0,t]}\frac{f_2}{F_2^2}dF_1.
\]
These facts along with condition C6 and the functional delta method \citep{Van00}, lead to the conclusion that
\begin{eqnarray*}
\sqrt{n}(\hat{A}_{n,hj}(t)-A_{0,hj}(t))&=&\sqrt{n}\mathbb{P}_n\Bigg[ \int_{(0,t]}\frac{d[(\mathbb{P}_n-P)N_{\cdot,hj}(u)]}{PY_{\cdot,h}(u)} \\
&&-\int_{(0,t]}\frac{(\mathbb{P}_n-P)Y_{\cdot,h}(u)}{PY_{\cdot,h}(u)}dA_{0,hj}(u)\Bigg]+o_p(1) \\
&=&\sqrt{n}\mathbb{P}_n\left[ \int_{(0,t]}\frac{dN_{\cdot,hj}(u)}{PY_{\cdot,h}(u)}-\int_{(0,t]}\frac{Y_{\cdot,h}(u)}{PY_{\cdot,h}(u)}dA_{0,hj}(u)\right] \\
&&-\sqrt{n}\left[ \int_{(0,t]}\frac{dPN_{\cdot,hj}(u)}{PY_{\cdot,h}(u)}-A_{0,hj}(t)\right]+o_p(1) \\
&=&\sqrt{n}\mathbb{P}_n\left[ \int_{(0,t]}\frac{dN_{\cdot,hj}(u)}{PY_{\cdot,h}(u)}-\int_{(0,t]}\frac{Y_{\cdot,h}(u)}{PY_{\cdot,h}(u)}dA_{0,hj}(u)\right]+o_p(1) \\
&=& \sqrt{n}\mathbb{P}_n\int_{(0,t]}\frac{d\bar{M}_{hj}(u)}{PY_{\cdot,h}(u)}+o_p(1) \\
&\equiv& \sqrt{n}\mathbb{P}_n\phi_{hj}(t)+o_p(1), \ \ \ \ t\in[0,\tau].
\end{eqnarray*}
The class of the influence functions $\{\phi_{hj}(t):t\in[0,
\tau]\}$ is $P$-Donsker by the Donsker property of the class $\{N_{\cdot,hj}(t):[0,\tau]\}$, conditions C2--C5, Lemmas 1 and 2, and corollary 9.32 in \citet{Kosorok08}. Therefore, $\sqrt{n}(\hat{A}_{n,hj}-A_{0,hj})$ converges weakly to a tight zero mean Gaussian process $\tilde{\mathbb{G}}_{3hj}$ in $D[0,\tau]$ with covariance function $P\phi_{hj}(t_1)\phi_{hj}(t_2)$, $t_1,t_2\in[0,\tau]$, for $h\neq j$. For $h=j$, $\sqrt{n}(\hat{A}_{n,hh}(t)-A_{0,hh}(t))=-\sqrt{n}\mathbb{P}_n\sum_{h\neq j}\phi_{hj}(t)+o_p(1)$, where the influence functions belong obviously to a $P$-Donsker class. Thus, the joint sequence $\sqrt{n}(\hat{A}_{n,hj}-A_{0,hj})$ for $h\neq j$, converges weakly to a tight zero mean Gaussian process with cross-covariance between $\tilde{\mathbb{G}}_{3hj}(t_1)$ and $\tilde{\mathbb{G}}_{3lq}(t_2)$ equal to $P\phi_{hj}(t_1)\phi_{lq}(t_2)$, for $h\neq j$, $l\neq q$, $t_1,t_2\in [0,\tau]$. Therefore, $\sqrt{n}(\hat{\mathbf{A}}_n-\mathbf{A}_0)$ converges weakly to a tight zero mean Gaussian process in $(D[0,\tau])^{k^2}$. %, where if $h\in\mathcal{T}$, $\tau_{1j}=0$, $\tau_{2j}=\tau$, $\sqrt{n}(\hat{A}_{n,hj}(t)-A_{0,hj}(t))=0$ and $\phi_{hj}(t)=0$ a.s. for all $t\in[0,\tau]$.
Now, the Hadamard differentiability of the product integral map \citep[proposition II.8.7 in][]{Andersen12}
\[
\mathbf{A}_0\mapsto \Prodi (\mathbf{I}_k-d\mathbf{A}_0),
\]
and the functional delta method \citep{Van00,Andersen12} lead to the conclusion that
\begin{eqnarray*}
\sqrt{n}[\hat{\mathbf{P}}_n(0,t)-\mathbf{P}_0(0,t)]&=&\sqrt{n}\mathbb{P}_n\int_0^{t}\Prodi_{[0,u)}[\mathbf{I}_k+d\mathbf{A}_0(v)]\boldsymbol{\phi}(du)\Prodi_{(u,\cdot]}[\mathbf{I}_k+d\mathbf{A}_0(v)]+o_p(1) \\
&\equiv&\sqrt{n}\mathbb{P}_n\boldsymbol{\gamma}(0,t)+o_p(1), \ \ \ \ t\in[0,\tau]
\end{eqnarray*}
where the matrix $\boldsymbol{\phi}_i(t)$ contains the elements $\phi_{ihj}(t)$, and the matrix $\boldsymbol{\gamma}_i(0,t)$ contains the elements
\[
\gamma_{ihj}(0,t)=\sum_{l\in\mathcal{T}^c}\sum_{q\in\mathcal{S}}\int_0^t\frac{P_{0,hl}(0,u-)P_{0,qj}(u,t)}{PY_{\cdot,l}(u)}d\bar{M}_{ilq}(u), \ \ \ \ t\in[0,\tau].
\]
By the $P$-Donsker property of the classes $\{N_{\cdot,hj}(t):t\in[0,\tau]\}$, for $h\neq t$, and $\{Y_{\cdot,h}(t):t\in[0,\tau]\}$, for $h\in\mathcal{T}^c$, conditions C3-C5, corollary 9.32 in \citet{Kosorok08}, and Lemmas 1 and 2, it follows that the classes $\{\gamma_{hj}(0,t):t\in[0,\tau]\}$ are $P$-Donsker for all $h\in\mathcal{T}^c$, $j\in\mathcal{S}$. This concludes the proof of part (i) of Theorem 2.

For the first conditional weak convergence result in part (ii) of Theorem 2, define the process $\tilde{B}_{hj}(0,t)=\sqrt{n}\mathbb{P}_n\gamma_{hj}(0,t)\xi$. By the $P$-Donsker property of the class $\{\gamma_{hj}(0,t):t\in[0,\tau]\}$ and the conditional multiplier central limit theorem \citep{Kosorok08} it follows that $\tilde{B}_{hj}(0,\cdot)\underset{\xi}{\overset{p}\leadsto}\mathbb{G}_{hj}(0,\cdot)$. Thus, it remains to show that
\[
\|\hat{B}_{hj}(0,t)-\tilde{B}_{hj}(0,t)\|_{\infty}=o_p(1),
\]
unconditionally on the observed data. After some algebra it can be shown that
\begin{equation}
\|\hat{B}_{hj}(0,t)-\tilde{B}_{hj}(0,t)\|_{\infty}\leq \sum_{l\in\mathcal{T}^c}\sum_{q\in\mathcal{S}}(\tilde{Q}_{n,lq1}+\tilde{Q}_{n,lq2}+\tilde{Q}_{n,lq3}), \label{ineqband}
\end{equation}
where
\[
\tilde{Q}_{n,lq1}=\left\|\sqrt{n}\mathbb{P}_n\int_0^t\left[\frac{\hat{P}_{n,hl}(0,u-)\hat{P}_{n,qj}(u,t)}{\mathbb{P}_nY_{\cdot,l}(u)}-\frac{P_{0,hl}(0,u-)P_{0,qj}(u,t)}{PY_{\cdot,l}(u)}\right]dN_{\cdot,lq}(u)\xi\right\|_{\infty},
\]
\[
\tilde{Q}_{n,lq2}=\left\|\sqrt{n}\mathbb{P}_n\int_0^t\left[\frac{\hat{P}_{n,hl}(0,u-)\hat{P}_{n,qj}(u,t)}{\mathbb{P}_nY_{\cdot,l}(u)}-\frac{P_{0,hl}(0,u-)P_{0,qj}(u,t)}{PY_{\cdot,l}(u)}\right]d\hat{A}_{n,lq}(u)\xi\right\|_{\infty},
\]
and
\[
\tilde{Q}_{n,lq3}=\left\|\int_0^t\frac{P_{0,hl}(0,u-)P_{0,qj}(u,t)}{PY_{\cdot,l}(u)}[\sqrt{n}\mathbb{P}_nY_{\cdot,l}(u)\xi]d[\hat{A}_{n,lq}(u)-A_{0,lq}(u)]\right\|_{\infty}.
\]
Next, it is easy to see that
\begin{eqnarray*}
\left|\frac{\hat{P}_{n,hl}(0,u-)\hat{P}_{n,qj}(u,t)}{\mathbb{P}_nY_{\cdot,l}(u)}-\frac{P_{0,hl}(0,u-)P_{0,qj}(u,t)}{PY_{\cdot,l}(u)}\right|&\leq& V\Bigg[\sup_{u\in [0,t]}|\hat{P}_{n,hl}(0,u-)-P_{0,hl}(0,u-)| \\
&&+\sup_{u\in [0,t]}|\hat{P}_{n,hl}(u,t)-P_{0,hl}(u,t)| \\
&& + \sup_{u\in[0,t]}\left|\frac{1}{\mathbb{P}_nY_{\cdot,l}(u)}-\frac{1}{PY_{\cdot,l}(u)}\right|\Bigg],
\end{eqnarray*}
almost everywhere with respect to both $\mu_{N_{\cdot,lq}}$ and $\mu_{\hat{A}_{n,lq}}$ (which is the Lebesgue--Stieltjes measure generated by $\hat{A}_{n,lq}$). Therefore, by condition C3 and C6, the outer almost sure consistency of the transition probability estimators, arguments similar to those used in the proof of Theorem 1, and the central limit theorem, it follows that
\[
\tilde{Q}_{n,lq1}\leq o_{as*}(1)O_p(1)V=o_p(1).
\]
By similar arguments and condition C5 it follows that $\tilde{Q}_{n,lq2}=o_p(1)$. Finally, by the $P$-Donsker property of the class $\{Y_{\cdot,l}(t):t\in[0,\tau]\}$, the uniform consistency of the cumulative transition intensity, and the same arguments to those used in the proof of proposition 7.27 in \citet{Kosorok08}, it follows that $\tilde{Q}_{n,lq3}=o_p(1)$, since convergence in distribution to a constant implies convergence in probability. Thus, by \eqref{ineqband}, $\|\hat{B}_{hj}(0,t)-\tilde{B}_{hj}(0,t)\|_{\infty}=o_p(1)$ and this concludes the proof of the first conditional weak convergence result in part (ii) of Theorem 2.

For the second conditional weak convergence result in part (ii) of Theorem 2, the $P$-Donsker property of the classes $\{N_{\cdot,hj}(t):t\in[0,\tau]\}$ and $\{Y_{\cdot,h}(t):t\in[0,\tau]\}$, condition C3, the weak convergence of the sequence $\sqrt{n}(\hat{A}_{n,hj} - A_{0,hj})$, the bootstrap central limit theorem \citep{Kosorok08}, and the bootstrap functional delta method \citep[Theorem 12.1]{Kosorok08}, imply that $\sqrt{n}(\hat{A}_{n,hj}^* - \hat{A}_{n,hj})\underset{U}{\overset{p}\leadsto}\tilde{\mathbb{G}}_{3hj}$ in $D[0,\tau]$, for $h\in\mathcal{T}^c$ and $j\neq h$. A second application of the bootstrap functional delta method and the bootstrap continuous mapping theorem \citep[Theorem 10.8,][]{Kosorok08} lead to the conclusion that $\sqrt{n}(\hat{P}_{n,hj}^*(0,\cdot) - \hat{P}_{n,hj}(0,\cdot))\underset{U}{\overset{p}\leadsto}\mathbb{G}_{hj}(0,\cdot)$. The proof of part (iii) of Theorem 2 follows from the same arguments.

\subsection{Proof of Theorem 3}
By Theorem 2 and the uniform consistency of $\hat{W}_{hj}(t)$, it follows that
\begin{eqnarray*}
\sqrt{n}\hat{W}_{hj}(t)\hat{\Delta}(0,t) &=& [\hat{W}_{hj}(t)-W_{hj}(t)]\sqrt{n}\mathbb{P}_n[\gamma_{1,hj}(0,t)-\gamma_{2,hj}(0,t)]\\ && +\sqrt{n}\mathbb{P}_nW_{hj}(t)[\gamma_{1,hj}(0,t)-\gamma_{2,hj}(0,t)] + o_p(1) \\
&=&\sqrt{n}\mathbb{P}_nW_{hj}(t)[\gamma_{1,hj}(0,t)-\gamma_{2,hj}(0,t)] + o_p(1).
\end{eqnarray*}
The boundedness of the fixed function $W_{hj}(t)$ and the $P$-Donsker property of $\{\gamma_{p,hj}(0,t):t\in[0,\tau]\}$, $p=1,2$, imply that the class $\{W_{hj}(t)[\gamma_{1,hj}(0,t)-\gamma_{2,hj}(0,t)]:t\in[0,\tau]\}$ is $P$-Donsker. Therefore, $\sqrt{n}\hat{W}_{hj}(\cdot)\hat{\Delta}(0,\cdot)\leadsto\mathbb{Z}_{hj}(0,\cdot)$ in $D[0,\tau]$, with the covariance function of the process $\mathbb{Z}_{hj}(0,\cdot)$ being
\[
W_{hj}(t_1)W_{hj}(t_2)P\{[\gamma_{1,hj}(0,t_1)-\gamma_{2,1hj}(0,t_1)][\gamma_{1,hj}(0,t_2)-\gamma_{2,1hj}(0,t_2)]\},
\]
for $t_1,t_2\in[0,\tau]$.

Next, by the conditional multiplier central limit theorem it follows that
\[
\tilde{C}_{n,hj}(0,\cdot)\equiv\sqrt{P}_nW_{hj}(\cdot)[\gamma_{1,hj}(0,\cdot)-\gamma_{2,hj}(0,\cdot)]\xi\underset{\xi}{\overset{p}\leadsto} \mathbb{Z}_{hj}(0,\cdot) \ \ \ \ in \ \ D[0,\tau].
\]
Also, by the uniform boundedness of $W_{hj}(t)$ and the $P$-Donsker property of the class $\{\gamma_{p,hj}(0,t)\xi:t\in[0,\tau]\}$, it follows that
\begin{eqnarray*}
\sup_{t\in[0,\tau]}\left|\hat{C}_{n,hj}(0,t)-\tilde{C}_{n,hj}(0,t)\right|&\leq& \sum_{p=1}^2\Bigg\{\sup_{t\in[0,\tau]}\Big|[\hat{W}_{hj}(t)-W_{hj}(t)] \\
&&\times \sqrt{n}\mathbb{P}_n[\hat{\gamma}_{p,hj}(0,t)-\gamma_{p,hj}(0,t)]\xi\Big| \\
&& + V\sup_{t\in[0,\tau]}\Big|\sqrt{n}\mathbb{P}_n[\hat{\gamma}_{p,hj}(0,t)-\gamma_{p,hj}(0,t)]\xi\Big| \\
&&+O_p(1)\sup_{t\in[0,\tau]}\left|\hat{W}_{hj}(t)-W_{hj}(t)\right|\Bigg\}.
\end{eqnarray*}
The uniform consistency of $\hat{W}_{hj}(t)$ and the arguments used in the proof of  part (ii) in Theorem 2 lead to the conclusion that $\sup_{t\in[0,\tau]}\left|\hat{C}_{n,hj}(0,t)-\tilde{C}_{n,hj}(0,t)\right|=o_p(1)$ and, thus, $\hat{C}_{n,hj}(0,\cdot)\underset{\xi}{\overset{p}\leadsto} \mathbb{Z}_{hj}(0,\cdot)$ in $D[0,\tau]$.

By Theorem 2 and the bootstrap continuous mapping theorem it follows that
\[
\sqrt{n}W_{hj}(\cdot)[\hat{\Delta}_{n,hj}^*(0,\cdot)-\hat{\Delta}_{n,hj}(0,\cdot)]\underset{U}{\overset{p}\leadsto} \mathbb{Z}_{hj}(0,\cdot) \ \ \ \ in \ \ D[0,\tau].
\]
By the (unconditional) multiplier central limit theorem \citep{Van96} and a double application of the functional delta method, it follows that $\sqrt{n}[\hat{P}_{n,phj}^*(0,\cdot)-\hat{P}_{n,phj}(0,\cdot)]$, $p=1,2$, converge weakly (unconditionally) to tight mean zero Gaussian processes in $D[0,\tau]$. This result along with the uniform consistency of $\hat{W}_{hj}(t)$ lead to the conclusion that
\begin{eqnarray*}
\left\|\sqrt{n}[\hat{W}_{hj}(t)-W_{hj}(t)][\hat{\Delta}_{n,hj}^*(0,t)-\hat{\Delta}_{n,hj}(0,t)]\right\|_{\infty}=o_p(1),
\end{eqnarray*}
unconditionally. Consequently,
\[
\sqrt{n}\hat{W}_{hj}(\cdot)[\hat{\Delta}_{n,hj}^*(0,\cdot)-\hat{\Delta}_{n,hj}(0,\cdot)]\underset{U}{\overset{p}\leadsto} \mathbb{Z}_{hj}(0,\cdot) \ \ \ \ in \ \ D[0,\tau].
\]
Part (ii) of Theorem 3 can be shown using similar arguments.

\section*{Appendix A.5: Violation of condition C6}
\renewcommand{\thesubsection}{A.\arabic{subsection}}

It is possible that, in some applications, condition C6 is not satisfied. This happens when there are transient states with 0 probability of occupation in a subset of the observation time interval $[0,\tau]$. This is the case, for example, in situations where $P_{0,h}(0)=0$ for some transient state(s) $h$. Even though the consistency of the proposed estimators requires only conditions C1-C5, Theorems 2 and 3 additionally require condition C6. If condition C6 is violated for some $h\in\mathcal{T}^c$, and in light of condition C4, it follows that
\[
A_{0,hj}(t)=\int_{(0,t]\cap J_h}\frac{dPN_{\cdot,hj}(u)}{PY_{\cdot,h}(u)},
\]
and
\[
\hat{A}_{n,hj}(t)=\int_{(0,t]\cap J_h}\frac{d\mathbb{P}_nN_{\cdot,hj}(u)}{\mathbb{P}_nY_{\cdot,h}(u)},
\]
where $A_{0,hj}(t)=\hat{A}_{n,hj}(t)=0$ if $t\in[0,t]\cap J_h^c$. In this case, the map $(F_1,F_2)\mapsto\int_{[0,t]\cap J_h}F_1^{-1}dF_2$ is Hadamard differentiable on the domain
\[
\left\{(F_1,F_2):\inf_{t\in J_h}|F_1(t)|\geq \epsilon, \int_{J_h}|dF_2(t)|<\infty\right\}
\]
for $\epsilon>0$ and $F_1^{-1}$ of bounded variation \citep{Kosorok08}. Therefore, the same calculations to those used in the proof of Theorem 2 lead to the conclusion that
\begin{eqnarray*}
\sqrt{n}\{\hat{A}_{n,hj}(t)-A_{0,hj}(t)\}&=&\sqrt{n}\mathbb{P}_n\int_{(0,t]\cap J_h}\frac{d\bar{M}_{hj}(u)}{PY_{\cdot,h}(u)}+o_p(1) \\
&=& \sqrt{n}\mathbb{P}_n\phi_{hj}(t)+o_p(1), \ \ \ \ t\in J_h,
\end{eqnarray*}
with the class $\{\phi_{hj}(t):t\in J_h\}$ being $P$-Donsker. This means that $\sqrt{n}(\hat{A}_{n,hj}-A_{0,hj})$ converges weakly to a tight zero mean Gaussian process $\tilde{\mathbb{G}}_{3hj}$ in $D J_h$ with covariance function $P\phi_{hj}(t_1)\phi_{hj}(t_2)$, $t_1,t_2\in J_h$, for $h\neq j$. The same arguments to those used in the proof of Theorem 2 can be used to show that this theorem holds for $t$ restricted to $\cap_{h\in\mathcal{T}^c}J_h$. This means that inference about $P_{0,hj}(s,t)$, $h\neq j$, is possible for $s$ and $t$ in $\cap_{h\in\mathcal{T}^c}J_h$. From a practical standpoint one needs to restrict the time interval for confidence intervals/bands and hypothesis tests to a set such that there are at least some observations in all transient states.

\end{appendices}

\end{document}